\documentclass[runningheads]{llncs}
\usepackage{ifthen}
\newboolean{isarxivversion} %
\setboolean{isarxivversion}{false} %
\usepackage[T1]{fontenc}

\usepackage{numprint}
\usepackage{balance}
\usepackage{cite}
\usepackage{amsmath,amssymb,amsfonts}
\usepackage{graphicx}
\usepackage{textcomp}
\usepackage{xcolor}
\usepackage{tcolorbox}

\usepackage{siunitx}

\usepackage{algorithm}
\usepackage[noend]{algorithmic}
\newcommand\algorithmicprocedure{\textbf{procedure}}
\newcommand{\algorithmicendprocedure}{\algorithmicend\ \algorithmicprocedure}
\usepackage{numprint}
\npdecimalsign{.}
\makeatletter
\newcommand\PROCEDURE[3][default]{%
  \ALC@it
  \algorithmicprocedure\ \textsc{#2}(#3)%
  \ALC@com{#1}%
  \begin{ALC@prc}%
}
\newcommand\ENDPROCEDURE{%
  \end{ALC@prc}%
  \ifthenelse{\boolean{ALC@noend}}{}{%
    \ALC@it\algorithmicendprocedure
  }%
}
\newenvironment{ALC@prc}{\begin{ALC@g}}{\end{ALC@g}}
\makeatother
\usepackage{graphicx}
\usepackage{amsmath}
\usepackage{amsfonts}
\usepackage{hyperref}

\usepackage{tabularx} %
\usepackage{booktabs} %

\usepackage{longtable} %
\usepackage{ltablex} %
\usepackage{tabularray}
\UseTblrLibrary{siunitx}

\usepackage{subcaption}

\usepackage{color}
\usepackage{xcolor}
\usepackage{tikz}

\definecolor{svgRed}{HTML}{B80500}   %
\definecolor{svgGreen}{HTML}{00AF7F} %
\definecolor{dark}{HTML}{000000} %
\definecolor{darkturquoise}{RGB}{0, 128, 128}

\definecolor{NewViolet}{RGB}{112, 15, 186}

\usepackage[normalem]{ulem}
\usepackage{soul}

\usepackage{xspace}
\newcommand{\algoFont}[1]{{\textsc{#1}}\xspace}

\newcommand{\competitorOUR}{\algoFont{HLM:C (GPU)}}
\newcommand{\competitorOURKokkosGPU}{\algoFont{HLM:K (GPU)}}
\newcommand{\competitorOURKokkosOneC}{\algoFont{HLM:K (1C)}}
\newcommand{\competitorOURKokkosSixteenC}{\algoFont{HLM:K (16C)}}

\newcommand{\competitorStreaming}{\algoFont{Stack Streaming}}
\newcommand{\competitorBIRN}{\algoFont{LM (MPI)}}
\newcommand{\competitorGreedy}{\algoFont{Greedy}}
\newcommand{\competitorSUMAC}{\algoFont{SuMaC}}

\newcommand{\competitorGBBS}{\algoFont{gbbs}}

\usepackage{svg}
\begin{document}
\title{Efficient Parallel Algorithms for Hypergraph Matching}
\author{Henrik Reinstädtler\inst{1} \and
    Christian Schulz\inst{1} \and
    Nodari Sitchinava\inst{2}\and
    Fabian Walliser \inst{1}}
\authorrunning{H. Reinstädtler et al.}
\institute{Heidelberg University, Heidelberg, Germany\\
    \email{\{henrik.reinstaedtler,christian.schulz,fabian.walliser\}\\@informatik.uni-heidelberg.de}
    \and
    University of Hawaii at Manoa, Manoa, United States\\
    \email{nodari@hawaii.edu}
}
\maketitle              %
\begin{abstract}
    We present efficient parallel algorithms for computing maximal matchings in hypergraphs. Our algorithm finds locally maximal edges in the hypergraph and adds them in parallel to the matching. In the CRCW PRAM models our algorithms achieve $O(\log{\log{\Delta}}\log{m})$ time with $O(\kappa\log {m})$ work w.h.p.
    where $m$ is the number of hyperedges,
    and $\kappa$ is the sum of all vertex degrees. The CREW PRAM model algorithm has a running time of $O((\log{\Delta}+\log{d})\log{m})$ and requires $O(\kappa \log {m})$ work w.h.p. It can be implemented work-optimal with $O(\kappa)$ work in $O((\log{m}+\log{n})\log{m})$ time.
    We prove a~$1/d$-approximation guarantee for our algorithms.
    We evaluate our algorithms experimentally by implementing and running the proposed algorithms on the GPU using CUDA and Kokkos. Our experimental evaluation demonstrates the practical efficiency of our approach on real-world hypergraph instances, yielding a speed up of up to 76 times compared to a single-core CPU algorithm.
    \keywords{hypergraph matching  \and PRAM \and GPU.}
\end{abstract}
\section{Introduction}
We introduce an algorithm to compute hypergraph matchings on the GPU and test it on a wide variety of instances.
A hypergraph is a natural extension where more than two vertices can be connected by a single hyperedge.
A matching in a hypergraph is a vertex-disjoint edge set, which  optimizes either the number or the total weight of edges in it.
Its applications range from personnel scheduling~\cite{froger2015set} to combinatorial auctions~\cite{gottlob2013decomposing}.
It is known to be NP-complete~\cite{approxresult}, but can be approximated to a factor of $1/d$ by a simple greedy algorithm~\cite{Besser+:Greedy,Korte+:Greedy}, where $d$ is the maximum edge size.
The problem is well studied in the sequential~\cite{dufosse2019effective}, semi-streaming~\cite{reinstadtler2025semi}, and distributed setting~\cite{hanguir2021distributed}.

There is limited work on solving NP-complete problems on GPUs, e.g.~independent set problems by Burtscher~et~al.~\cite{burtscher2018high}.
There are several algorithms for the related problem of matching in graphs on GPUs or in parallel~\cite{auer2012gpu,manne2014new,birn2013efficient}.
Most notably, Birn~et~al.~\cite{birn2013efficient} introduced a parallel local max strategy, in which edges are added if their weight exceeds that of all their neighbors.
\paragraph*{Our Contribution.}
We present an algorithm that finds locally maximal edges and computes matchings in hypergraphs on the GPU.
The  running time is $O(\log{\log{\Delta}}\log{m})$ with $O(\kappa \log{m})$ work on the CRCW model and $O((\log{\Delta} + \log{d}) \log{m})$ running time with $O(\kappa \log{m})$ work on the CREW PRAM model, where $m$ is the number of hyperedges, plus $n$ the number of vertices,~$\Delta$ is the maximum degree, $d$ is the maximum hyperedge size, and $\kappa$ is the sum of all vertex degrees.
Additionally, we provide a work-optimal CREW algorithm requiring $O(\kappa)$ work and $O((\log{m}+\log{n})\log{m})$ time  w.h.p.
Each algorithm has an approximation guarantee of $1/d$.  The algorithms are implemented in CUDA and Kokkos~\cite{KokkosEcosystem2021}, ensuring improved cross-architecture compatibility. We evaluate on a diverse set of instances using both GPU and CPU platforms, showing  a speed-up of up to \numprint{76} times relative to a single-core CPU implementation.
We show the that the algorithm scales well on GPU and CPU when implemented with Kokkos.
On graphs, our algorithm \competitorOUR is faster than the  multicore CPU implementation by Birn~et~al.~\cite{birn2013efficient} in half of the categories and \competitorGBBS~\cite{dhulipala2021theoretically} on all except one.
Compared to the latest GPU matching by Mandulak et al.~\cite{mandulak2024efficient}, our Kokkos-based implementation \competitorOURKokkosGPU it is up to~\numprint{27.16} times faster and wins all categories, while producing a relative matching quality of~\numprint{87.6} to~\numprint{98.2}~\%.

The rest of the paper is structured as follows. Section~\ref{sec:prelim} introduces basic concepts and notations, while Section~\ref{sec:relwork} summarizes the related work in our field.
The algorithms are presented in Section~\ref{sec:algos:hm} and details of the implementation are discussed in Section~\ref{sec:details}. We experimentally evaluate our algorithms and discuss the results in Section~\ref{sec:exp}. Finally, we conclude in Section~\ref{sec:conclude}.
\section{Preliminaries}\label{sec:prelim}
\subsection{Basic Concepts}
\subsubsection{Hypergraphs.}
A weighted undirected hypergraph $H=(V,E,\omega)$ consists of a set $V$ of $n$ vertices and a set $E$ of~$m$~hyperedges. Each hyperedge $e \subseteq V$ is a subset of vertices whose weight is defined by the weight function $\omega: E\to \mathbb{R}_{> 0}$. %
The \emph{size} of a hyperedge is the number of vertices it contains and is denoted by $\left|e\right|$. The maximum size of a hyperedge, also called the \emph{rank}, is defined by $d := \max_{e \in E}\left|e\right|$. If all hyperedges have the same size $d$, the hypergraph is called $ d$-uniform.
Two edges are considered \emph{disjoint} if they do not share any common vertices and are \emph{neighbors} if they do share at least one vertex. An edge is considered \emph{locally maximal} if it has a greater weight than all of its neighbors.
The set of hyperedges that contain a vertex $v$ is defined by $E[v]$. Thereby, $deg(v) = |E[v]|$ denotes the number of hyperedges a vertex $v$ is contained in and is called the \emph{degree} of $v$. The maximum degree is denoted by $\Delta := \max_{v \in V}deg(v)$. We assume that vertices with degree $0$ have been removed during the preprocessing phase. Therefore, \hbox{$\Delta > 0$} or else the hypergraph is empty.
Furthermore, $\kappa = \sum_{v \in V} deg(v) = \sum_{e \in E} \left|e\right|$ is defined as the sum over all vertex degrees, which is equal to the sum of all edge sizes. Note that $\kappa \leq \min\{\Delta n, d m\}$.
\subsubsection{Matching.} A subset of hyperedges $M \subseteq E$ is a \emph{matching}, if all hyperedges are pairwise disjoint. The weight of a matching $M$ is defined by $w(M):= \sum_{e\in M}w(e)$. A matching $M$ is called \emph{maximal} if no hyperedge can be added without violating the matching property. A \emph{maximum} matching possesses the largest possible weight of all matchings. If the weight of each hyperedge is exactly the same, the problem is referred to as \emph{maximum cardinality~matching}, maximizing $|M|$. Finding a cardinality matching or the more general weighted hypergraph matching is  NP-hard~\cite{approxresult}.
This is in contrast to ordinary graphs, for which the problem can be solved in polynomial time~\cite{edmonds_1965}.
\subsubsection{Line Graph.}
Given a hypergraph $H = (V, E)$, the line graph $L(H) = (V', E')$ is defined by $V' := E$, and $\{e_1, e_2\} \in E'$ if $e_1 \cap e_2 \neq \emptyset$, i.e.~there is an edge in $L(H)$ if the two corresponding hyperedges share at least one vertex.
\subsubsection{Random Number Generation.}
Since random number generation itself is a subject of ongoing research, we adopt the common assumption that generating a random number takes constant time, i.e., $O(1)$ complexity.

\subsection{Computational Models}

The PRAM model consists of processors (PE) with uniform access to shared global memory.
Computation proceeds in synchronous steps, where in each cycle every processor can perform a single machine operation, each operation taking unit time. Since an operation of a synchronous PRAM model may involve concurrent memory access by several processors, different variants of the PRAM model are distinguished depending on whether such access is permitted or whether exclusive access is enforced. The \emph{Concurrent Read Exclusive Write (CREW)} model allows for simultaneous read operations but restricts writes such that only one PE may write to a given memory location at any time. The \emph{Concurrent Read Concurrent Write (CRCW)} model allows both simultaneous read and write operations performed to the same shared memory location. For write conflicts in CRCW, several submodels are available: writes are permitted only if all PEs attempt to write the same value (\emph{Common CRCW}); an arbitrary value is chosen (\emph{Arbitrary CRCW}); the value of the PE with the smallest ID is selected (\emph{Priority CRCW}); or the written values are combined according to \hbox{a predefined operation (\emph{Combining CRCW}).}

The MapReduce~\cite{dean2008mapreduce} framework is one of the dominant paradigms for processing large-scale data sets and works by defining two functions: \emph{map} and \emph{reduce}. The map function processes the input in parallel, and the intermediate results are combined and simplified during the reduce phase.
When processing large instances, the external memory model~\cite{vitter1994algorithms} provides a two-level storage system, where secondary storage can be accessed randomly as well.

\section{Related Work}\label{sec:relwork}
\noindent In this section, we give an overview of the literature on~matchings in graphs and hypergraphs as well as related~problems.
\paragraph{Matching in Graphs.} One of the most famous results in computer science is the polynomial complexity of the matching problem in graphs by Edmonds~\cite{edmonds_1965}, which can be reduced to $O(mn)$ with the disjoint-set union data structure of Tarjan~\cite{DBLP:journals/jcss/GabowT85}. Based on the Edmonds algorithm, Schwing et al.~\cite{DBLP:conf/ipps/SchwingGS24} developed a parallel shared-memory algorithm for cardinality matching.
The first $O(m)$ $\frac{1}{2}$-approximation is due to Preis~\cite{preis1999linear} and uses locally dominant edges.
A simpler algorithm with the same approximation guarantee was developed by Drake and Hougardy~\cite{DH03a} using path growing (PGA) in linear time. Pettie and Sanders~\cite{PETTIE2004271} present a $\frac{2}{3}-\varepsilon$ approximation with expected running time of $\mathcal{O}(m\log{\frac{1}{\varepsilon}})$.
Birn et al.~\cite{birn2013efficient} engineered a parallel algorithm with a $\frac{1}{2}$-approximation guarantee and $O(\log^2{n})$ running time in the CREW PRAM model utilizing locally dominant edges.
Dhulipala~et~al.~\cite{dhulipala2021theoretically} implement in \competitorGBBS a parallel shared-memory algorithm by Blelloch~et~al.~\cite{10.1145/2312005.2312058} with linear work for matchings.
Manne and Halappanavar~\cite{manne2014new} propose the suitor algorithm, which computes the same matching as the greedy algorithm, but is easily parallelizable.
Naim~et~al.~\cite{naim2015optimizing} implement the suitor approach on GPUs.
Mandulak~et~al.~\cite{mandulak2024efficient} compute locally dominant (pointer-based) matchings on multi-GPU systems and achieve a speedup between $2$ and $45$ compared to single-GPU implementations.
\paragraph{Matching in Hypergraphs.}
For $d$-uniform hypergraphs, Hazan~et~al.~\cite{approxresult} show that the matching problem is not efficiently approximable within a factor of $\Omega(d / \ln(d))$ unless~$P = NP$.
In the unweighted setting, Fürer and Yu~\cite{fuererapprox} developed a polynomial-time algorithm  with an approximation ratio of $\frac{d + 1}{3} + \epsilon$.
In the weighted setting, Berman's \emph{SquareImp} algorithm~\cite{10.1007/3-540-44985-X_19} yields an approximation guarantee of $\frac{k + 1}{2} + \epsilon$. %
This was recently improved by Neuwohner, reducing the approximation factor to $\frac{k + \epsilon_k}{2}$~\cite{DBLP:journals/corr/abs-2106-03555} or $\frac{k + 1}{2} - 2 \cdot 10^{-4}$ for $k \geq 4$\cite{neuwohner2023passing}. We are unaware of any practical implementations of these algorithms.
Dufosse~et~al.~\cite{dufosse2019effective} introduce reduction rules for special $d$-partite, $d$-uniform hypergraphs using Karp-Sipser rules and a scaling argument to select more hyperedges.
Hanguir and Stein~\cite{hanguir2021distributed} present three algorithms in the MPC (massively parallel computation) model with a trade-off between the approximation guarantee, memory requirement per machine and number of rounds of communication.
Their best algorithm with respect to the approximation guarantee takes $O(\log{n})$ rounds and $O(nd)$ space per machine.
Balliu~et~al.~\cite{balliu2023distributed} study the maximal matching and the related independent set problem in hypergraphs in the LOCAL model. They show that with unlimited message size, any deterministic algorithm requires at least $\Omega(\min\{\Delta d, \log_{\Delta d}{n}\})$ rounds of communication and $\Omega(\min\{\Delta d, \log_{\Delta d}{\log {n}}\})$ rounds if the algorithm is randomized.
Ghaffari and Trygub~\cite{ghaffari2024parallel} show that computing hypergraph matchings in parallel can be done in $O(\log{m})$ time and $O(md\log{m})$ work.
Blelloch and Brady~\cite{blelloch2025parallel} present a work-efficient algorithm with $O(\kappa)$ work and $O(\log^2{m})$ time w.h.p.
We are unaware of any practical implementations for these two algorithms.
In the semi-streaming model, Reinstädtler~et~al.~\cite{reinstadtler2025semi} present two approximation algorithms with a guarantee of $1/d(1+\varepsilon)$ in the semi-streaming model.
Großmann~et~al.~\cite{grossmann2024engineering} introduced reductions and local search for the relaxed $b$-matching that are directly applicable to hypergraph matching and implement these sequentially.
\paragraph{Related Problems} The maximum hypergraph matching problem is equivalent to the set packing problem, which involves finding a disjoint subset from a collection of sets such that the sum of its weights is maximized. These problems can also be reduced to the maximum independent set problem (MIS) in the line graph.
One of the first parallel algorithms for computing MIS is due to Karp and Wigderson~\cite{10.1145/800057.808690}.
The best known is Luby's algorithm~\cite{luby1985simple} which runs in $O(\log^2{n})$ using $O(m)$ processors in the arbitrary CRCW PRAM model.
It assigns random weights to the vertices of the graph and finds dominant vertices in parallel.
More recently, the ECL algorithm was proposed by Burtscher~et~al.~\cite{burtscher2018high} for GPUs building upon Luby's algorithm using reverse vertex degree as weights.

\section{Algorithms for Hypergraph Matching}\label{sec:algos:hm}
We begin with a sequential greedy local maximum algorithm that identifies locally maximal edges in the hypergraph and adds them to the solution. We then develop parallel variants in the PRAM model. Our algorithms iteratively identify locally maximal edges based on weights and add them to the matching while removing conflicting edges. The parallel variants offer different trade-offs between running time and work complexity under varying memory access constraints, with the CRCW variant achieving better time bounds through concurrent writes, while the CREW variant requires additional coordination but maintains exclusive write access.

\subsection{Sequential Algorithm}\label{sequ:description}
The sequential local-max algorithm works in two steps per round. In the first step we iterate over every active vertex and store the highest weight edge it is part of. Then we check for every edge if it was marked by all its vertices.
If they agreed, we add the edge to the matching and remove its neighbors.
This process is repeated over several rounds until every edge has either been matched or removed.
Before every round, small random perturbations are added to the edge weights taking $O(1)$ time. This noise is essential to reduce the number of rounds and to prevent the algorithm from getting stuck when two edges have identical weights.

Instead of removing, we soft-delete the hyperedges being removed by marking them as~inactive in order to reduce the number of times remaining edges would need to be repacked.
The implementations in the several models differ only in details of these steps and are discussed in Section~\ref{sec:details}.

It is straightforward to see that this algorithm generates a maximal matching, since at most one edge per vertex is added to the matching, and this proceeds until each edge is either part of the matching or inactive.

\subsubsection{Approximation Guarantee.}
The following lemma gives an approximation guarantee for the resulting weight.
\begin{lemma}\label{lemma:approx_guarantee}
    In a weighted hypergraph $H$, the weight of any maximal matching~$M$ produced by a local max algorithm is at least $1 / d$ of the maximum matching, where $d$ denotes the maximum edge size.%
\end{lemma}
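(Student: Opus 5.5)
We prove this with a charging argument that assigns every edge of an optimal matching $M^*$ to an edge of $M$ that blocked it. The key property of a local max algorithm is that, in each round, an edge $e$ enters $M$ only if it is locally maximal among the currently active edges. Its weight is therefore at least that of every active neighbor at that moment. Those neighbors are exactly the edges it deactivates. The random perturbations only break ties, so we may argue with the (perturbed) order and then let the perturbation tend to zero. Alternatively we can note that the inequality $w(e)\ge w(f)$ holds for the original weights up to arbitrarily small noise.

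\textbf{Charging map.} For every $f\in M^*$ we define a charge target $\phi(f)\in M$. If $f\in M$, set $\phi(f)=f$. Otherwise $f$ eventually became inactive, and it did so in the round in which some edge $e\in M$ with $e\cap f\neq\emptyset$ was added. Set $\phi(f)=e$; if several edges were added in that round, pick any one of them that intersects $f$. At the moment $e$ was added, $f$ was still active and was a neighbor of $e$. Local maximality then gives $w(e)\ge w(f)$. The case $\phi(f)=f$ gives this inequality trivially. Maximality of $M$ guarantees that every $f$ gets charged somewhere.

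\textbf{Counting charges.} Fix $e\in M$. Every $f$ with $\phi(f)=e$ intersects $e$; this includes $f=e$ itself, which meets $e$ at every vertex. The edges of $M^*$ are pairwise vertex-disjoint, so each vertex of $e$ lies in at most one such $f$. Hence $|\phi^{-1}(e)|\le |e|\le d$. Summing over the charges,
\[
w(M^*)=\sum_{f\in M^*} w(f)\le \sum_{e\in M}\sum_{f\in\phi^{-1}(e)} w(e)\le d\sum_{e\in M} w(e)=d\,w(M),
\]
which gives the claimed bound $w(M)\ge w(M^*)/d$.

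\textbf{Main obstacle.} The only delicate step is justifying $w(e)\ge w(f)$ for the charged pair. We must ensure that $f$ was still active, and hence a competitor of $e$, when $e$ was selected. This follows because edges become inactive only when a neighbor is matched, so we choose $\phi(f)$ to be an edge matched in the round that deactivated $f$. We must also handle the perturbed weights. If the perturbations are small relative to the weight gaps, or if they are used purely for tie-breaking, then the comparison transfers to the true weights.
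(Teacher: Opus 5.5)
Your proof is correct and uses the same charging idea as the paper: each edge of the optimum outside $M$ is charged to a neighboring matched edge of at least its weight, and vertex-disjointness of the optimum limits each matched edge to at most $d$ charges. Your explicit choice of $\phi(f)$ as an edge matched in the round that deactivated $f$ makes rigorous a step the paper states loosely. The paper's intermediate bound $d\,\omega(e) > \sum_{f\in\mathcal{M}\setminus M,\, f\cap e\neq\emptyset}\omega(f)$ sums over all intersecting $f$, but domination by $e$ is only guaranteed for an edge that actually blocked $f$.
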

\begin{figure}[t]
    \centering
    \includegraphics{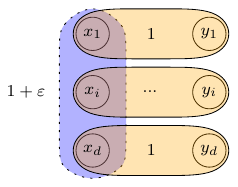}
    \caption{Visualization for the proof of the tight lower bound in Lemma~\ref{lemma:approx_tight_bound}. There are edges that each contain a pair of nodes $x_{i}$ and $y_{i}$ for all $i \in \{1, \dots, d\}$ with weight $1$. Additionally, all of the nodes $x_1$ to $x_{d}$ are also connected by an edge with weight $1 + \epsilon$ (\dotuline{\textcolor{blue}{blue}}).}
    \vspace{-0.6cm}
    \label{fig:tightness}
\end{figure}
\begin{proof}
    Let $\mathcal{M}$ be a maximum matching and $M$ the maximal matching of  $H$ found by our algorithm.
    Any edge in $\mathcal{M}$ is either in $M$ or shares a vertex with an edge in $M$. Otherwise, $M$ would not be maximal.

    For every of its up to $d$ vertices, an edge~$e$ in $M\setminus \mathcal{M}$ can be a neighbor of  an edge~$f$ in the given maximum matching~$\mathcal{M}$ that is not in $M$.
    However, $f$ must not be locally maximal; otherwise, it would be in $M$.
    So $\omega(e)>\omega(f)$ (for at least one of its neighbors in $M$) and $d\omega(e)>\sum_{f\in \mathcal{M}\setminus M\colon f\cap e\neq \emptyset}\omega(f)$.
    We rewrite the sum of weights by splitting the sum of $\mathcal{M}$ into $\mathcal{M}\cap M$ and $\mathcal{M}\setminus M$  in $\mathcal{M}$ and bound the latter by scaling the weight of~$M$~$d$~times.
    {\footnotesize\begin{align*}
        \sum_{f\in \mathcal{M}}\omega(f) & =\sum_{f\in \mathcal{M}\setminus M}\omega(f)+\sum_{f\in \mathcal{M}\cap M}\omega(f)
                                         & \leq \sum_{e\in M\setminus \mathcal{M}}d \omega(e)+\sum_{e\in \mathcal{M}\cap M}\omega(e) & \leq d\sum_{e\in M}\omega(e).
    \end{align*}}
\end{proof}

The following lemma demonstrates that the above bound is arbitrarily close to optimal, which is further illustrated in Figure~\ref{lemma:approx_guarantee}.
\begin{lemma}\label{lemma:approx_tight_bound}
    There exists a hypergraph $H$, such that the local-max algorithm cannot obtain approximation ratio better than  $1/(d(1+\epsilon))$ for any $\epsilon > 0$.
\end{lemma}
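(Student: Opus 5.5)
We prove the statement with the instance sketched in Figure~\ref{fig:tightness} and a direct computation of both matchings. Fix $d \ge 2$ and $\epsilon > 0$. Take $2d$ vertices $x_1,\dots,x_d,y_1,\dots,y_d$. For each $i$ add the edge $\{x_i,y_i\}$ of weight $1$, and add one hyperedge $e^\ast=\{x_1,\dots,x_d\}$ of weight $1+\epsilon$. The rank is exactly $d$, so Lemma~\ref{lemma:approx_guarantee} applies with this $d$. The plan has three steps: trace what the local-max algorithm does, compute the optimum, and take the ratio.

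For the algorithm, note that in the first round every $x_i$ sees two incident edges, $\{x_i,y_i\}$ with weight $1$ and $e^\ast$ with weight $1+\epsilon$. Each $y_i$ sees only its own pair edge.
\begin{itemize}
\item Every $x_i$ marks $e^\ast$, so $e^\ast$ is marked by all of its vertices. It is locally maximal, since its weight exceeds that of each of its $d$ neighbours.
\item No pair edge is locally maximal, because $e^\ast$ is a heavier neighbour of each of them.
\item Hence $e^\ast$ is added to $M$, all $d$ pair edges are deactivated, and the algorithm stops with $w(M)=1+\epsilon$.
\end{itemize}

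The perturbation step needs care. The random noise added before each round must not reverse the order between $1+\epsilon$ and $1$. One option is to assume the perturbations are bounded by something smaller than $\epsilon/2$, which they are by design, since they only serve as tie-breakers. The cleaner option is to note that the algorithm is deterministic on distinct weights, so no perturbation is needed here. I expect this point to be the only real obstacle, and I would handle it by explicitly assuming perturbations of magnitude below $\epsilon/2$.

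For the optimum, the $d$ pair edges are pairwise vertex-disjoint, so they form a matching of weight $d$. Any matching containing $e^\ast$ blocks every pair edge and therefore weighs only $1+\epsilon$. The maximum is thus $d$, attained by the pair edges for $\epsilon<d-1$. The ratio is
\[
\frac{1+\epsilon}{d} \;\le\; \frac{1}{d(1-\epsilon')}
\]
for a suitable $\epsilon'$, and it tends to $1/d$ as $\epsilon\to 0$. This matches the form $1/(d(1+\epsilon))$ after reparametrising: choose edge weight $1+\delta$ with $\delta$ small, so that the achieved ratio is $(1+\delta)/d$, which is below any bound strictly larger than $1/d$. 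The algorithm therefore cannot guarantee a ratio better than $1/d$ up to an arbitrarily small factor. This shows that Lemma~\ref{lemma:approx_guarantee} is tight.
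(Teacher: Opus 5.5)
Your proposal uses the same instance and the same argument as the paper (pair edges $\{x_i,y_i\}$ of weight $1$ plus the hyperedge $\{x_1,\dots,x_d\}$ of weight $1+\epsilon$, so the algorithm returns weight $1+\epsilon$ against an optimum of $d$), and your extra checks make it more careful than the paper's version: the explicit round trace, the condition $\epsilon<d-1$, and the perturbation issue (which you could also settle without any assumption by scaling all weights by a factor $W$ with $W\epsilon$ exceeding the noise range, since the ratio is scale-invariant). One correction to your last paragraph: both arguments show a ratio of $(1+\epsilon)/d$, i.e.\ tightness of Lemma~\ref{lemma:approx_guarantee}, and no reparametrisation turns this into the literal value $1/(d(1+\epsilon))$, which lies below the $1/d$ guarantee of Lemma~\ref{lemma:approx_guarantee}; keep your closing interpretation and drop the ``matches the form'' remark.
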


\begin{proof}
    Consider a hypergraph $H$ with $n=2d$ vertices shown in Figure~\ref{fig:tightness}. Each vertex is exactly connected to one other with weight $1$, resulting in $d$ edges. Add now an edge, which is connected to exactly one vertex of each pair and has weight $1 + \epsilon$ with $\epsilon > 0$. Applying the local max algorithm results in a weight of $1 + \epsilon$ for the matching, but the optimal value is $d$ when choosing all the edges with weight $1$.
\end{proof}

\subsection{Parallel Algorithms} \label{pram:description}
Starting from the sequential algorithm, we derive a parallel algorithm for weighted hypergraph matching by applying the same strategy using a Combining CRCW model. In the following, we present two refined variants of the model using either concurrent or exclusive writes that solve the unweighted (cardinality) hypergraph matching problem by randomizing the edge weights in each round. This design choice enables a more accurate estimation of the number of rounds (see Claim~\ref{thrm:rounds}). This is necessary since the analysis assumes a uniform distribution of weights for each round independently.

\paragraph{Sum-CRCW PRAM Algorithm.} \label{CRCW_PRAM_ALGORITHM}
In Algorithm~\ref{alg:para:hypermatching:crcw}, we present a SUM-CRCW PRAM algorithm.
In the SUM-CRCW model concurrent writes to the same variable are resolved as additions.
Each round of the algorithm consists of four sub-phases. At the beginning of each round, the first phase (Line~\ref{crcw:line:weights:begin}-\ref{crcw:line:weights:end}) reassigns random weights to the edges. In the second phase, each vertex identifies the heaviest incident edge. %
The third phase adds all locally maximal edges to the matching and marks their adjacent vertices as newly completed (Line~\ref{crcw:line:mark_and_newly_completed:begin}-\ref{crcw:line:mark_and_newly_completed:end}). In the final phase, all newly completed vertices as well as their adjacent edges are marked as inactive (Line~\ref{crcw:line:mark_inactive:begin}-\ref{crcw:line:mark_inactive:end}).
An edge is marked inactive by writing an arbitrary positive value and active edges can be identified by having zero as value in the status array of the edges.
These four sub-phases are repeated until all edges are either included in the matching or marked as inactive.

\begin{algorithm}[t]
    \caption{CRCW PRAM Algorithm for Hypergraph Matching}
    \label{alg:para:hypermatching:crcw}
    {\footnotesize
        \begin{algorithmic}[1]
            \PROCEDURE{CRCW-LocalMax}{$H=(V,E)$}
            \STATE $P\gets [0]\times \left|E\right|$ \COMMENT{Top edge id count per edge}
            \WHILE{\textbf{not} all edges marked inactive}
            \FOR{$e \in E$  \textbf{active in parallel}}\label{crcw:line:weights:begin}
            \STATE $w(e)\gets \mathrm{rnd()}$ \COMMENT{Initialize weights}
            \ENDFOR \label{crcw:line:weights:end}
            \STATE reset $P$
            \FOR{$v \in V$ \textbf{active in parallel}} \label{crcw:line:maximal_edges:begin}
            \STATE  Find $e:=\mathrm{argmax}_{e\in E(v), e\text{ active}}(w(e))$.
            \STATE \textbf{If} $T[v]\neq e$ \textbf{then}\quad  $T[v] \gets e$ and increase $P[e]$ \COMMENT{concurrent write}
            \ENDFOR\label{crcw:line:maximal_edges:end}
            \FOR{$e\in E$ \textbf{active in parallel}} \label{crcw:line:mark_and_newly_completed:begin}
            \STATE\textbf{if} $P[e]=\left| e\right|$ \textbf{then}\,  mark $e$ as in Matching and all $v \in e$ as newly completed\label{crcw:line:mark_and_newly_completed:end}
            \ENDFOR
            \FOR{$v\in V$ \textbf{active in parallel}}\label{crcw:line:mark_inactive:begin}
            \IF{$v$ marked newly completed}
            \STATE \textbf{for} $e \in E[v]$ active \textbf{do in parallel} mark $e$ as inactive \COMMENT{concurrent write}\label{crcw:line:set_edges_inactive}
            \ENDIF
            \STATE mark $v$ as inactive
            \ENDFOR \label{crcw:line:mark_inactive:end}
            \ENDWHILE
            \ENDPROCEDURE
        \end{algorithmic}}
\end{algorithm}

\textit{Running Time:} We divide the analysis of the running time into two main parts. First, we examine the time complexity of a single round. Then, leveraging the properties of the line graph, we establish an upper bound on the total number of rounds required.  This results in the following theorem.
\begin{theorem}
    Algorithm~\ref{alg:para:hypermatching:crcw} for cardinality matching runs in $O(\log{\log{\Delta}}\log{m})$ time and \hbox{$O(\kappa\log{m})$} work \hbox{w.h.p.}~in Sum-CRCW PRAM.
\end{theorem}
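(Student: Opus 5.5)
We split the bound into a per-round cost and a bound on the number of rounds; the theorem then follows by multiplying.

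\emph{Cost of one round.} Phase one (random weights) and phase four (marking edges inactive) take $O(1)$ time. Phase four has one processor per vertex--edge incidence, so its work is $O(\kappa)$. The concurrent writes there all write a positive value, which is harmless under Sum-CRCW since we only test for zero. Phase three is $O(1)$ time and $O(m)\subseteq O(\kappa)$ work: each active edge compares $P[e]$ with $|e|$ and marks its vertices, assigning one processor per incidence.

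The costly step is phase two, where each vertex $v$ computes the argmax of the weights of at most $\Delta$ active incident edges. Here we use the classical CRCW maximum-finding algorithm based on doubly-logarithmic trees (Valiant, Shiloach--Vishkin). It finds the maximum of $k$ values in $O(\log\log k)$ time with $O(k)$ work; it needs only Common/Arbitrary writes, which Sum-CRCW can simulate by writing flags and testing for nonzero. Applied to every vertex simultaneously, this gives $O(\log\log\Delta)$ time and $O(\sum_v \deg(v))=O(\kappa)$ work. Processor allocation is trivial because the incidence lists are stored contiguously; active edges need not be compacted, since inactive ones can be given weight $-\infty$.

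The update of $P[e]$ is a concurrent increment, which the Sum-CRCW model resolves as an addition. The test $T[v]\neq e$ does not lose counts, because $P$ is reset each round. More precisely, we read the increment as ``add one for every vertex whose top edge is $e$''. So one round costs $O(\log\log\Delta)$ time and $O(\kappa)$ work.

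\emph{Number of rounds.} This is the main obstacle. We pass to the line graph $L(H)$: an edge is selected exactly when its fresh uniform random weight exceeds those of all its neighbours in $L(H)$ among active edges. That is precisely one round of Luby's random-priority MIS algorithm on $L(H)$, and removing the matched edges and their neighbours corresponds to deleting the selected vertices and their neighbourhoods. The weights are redrawn independently each round, so we can invoke the standard analysis of Luby's algorithm. In each round the expected number of \emph{line-graph edges} removed is at least a constant fraction of the remaining ones (the Métivier et al.\ argument, counting pairs where $u$ has the maximum weight in $N(u)\cup N(v)$). Equivalently, a constant fraction of the active vertices of $L(H)$ is removed in expectation. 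With Markov/Chernoff over rounds, $O(\log |E(L(H))|)=O(\log m^2)=O(\log m)$ rounds suffice w.h.p. I would state this as the round-bound claim (Claim~\ref{thrm:rounds}). Its delicate point is that the argument needs independence between rounds and uniform weights, which is exactly why the algorithm re-randomizes.

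\emph{Combining.} Multiplying the two parts gives $O(\log\log\Delta\cdot\log m)$ time and $O(\kappa\log m)$ work w.h.p. The work bound does not rely on shrinking the instance; it simply charges $O(\kappa)$ per round for $O(\log m)$ rounds. The termination check (``all edges inactive'') is an OR over $m$ flags, which takes $O(1)$ time and $O(m)$ work on CRCW: every active edge writes into a common flag.
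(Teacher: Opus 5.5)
Your proposal is correct and takes essentially the same route as the paper. Each round costs $O(\log\log\Delta)$ time and $O(\kappa)$ work via Valiant-style doubly-logarithmic maximum finding, and this is multiplied by an $O(\log m)$ w.h.p.\ round bound obtained by reading the algorithm as Luby's random-priority MIS algorithm on the line graph $L(H)$. One aside is wrong but harmless: removing a constant fraction of the line-graph edges in expectation is \emph{not} equivalent to removing a constant fraction of its vertices (e.g.\ $K_{\sqrt{m},\,m-\sqrt{m}}$, which is the line graph of a suitable hypergraph, loses only $O(\sqrt{m})$ vertices per round in expectation), but your bound $O(\log\lvert E(L(H))\rvert)=O(\log m)$ uses only the edge count, so the argument still holds.
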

\begin{claim}\label{thrm:crcw:work_per_round}
    The time per round is bounded by $O(\log{\log{\Delta}})$ with work $O(\kappa)$ on the Sum-CRCW PRAM model, where $\kappa$ is the sum of  (vertex) degrees and $\Delta$ the maximum vertex degree in the hypergraph.
\end{claim}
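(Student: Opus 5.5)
We go through the four sub-phases of one round of Algorithm~\ref{alg:para:hypermatching:crcw} and bound the time and work of each on the Sum-CRCW PRAM. We allocate one processor per incidence pair $(v,e)$ with $v\in e$, so there are $\kappa$ processors in total, plus one processor per edge and one per vertex. Since $\kappa\ge m$ and $\kappa\ge n$ (degree-$0$ vertices were removed), the extra processors cost only $O(\kappa)$ work. We assume incidence lists $E[v]$ and $e$ are stored as arrays, e.g.\ in CSR format.

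\textbf{Cheap phases.} Phase one, reassigning random weights, takes one processor per active edge. With the paper's assumption that a random number costs $O(1)$, it runs in $O(1)$ time and $O(m)$ work. Resetting $P$ is likewise $O(1)$ time and $O(m)$ work. Phase three has each active edge compare $P[e]$ with $|e|$. We precompute $|e|$ once, or, if edges can lose vertices, update it with a Sum-CRCW concurrent write per incidence. If the test succeeds, the edge marks its vertices as newly completed using its $|e|$ incidence processors, all in parallel. This gives $O(1)$ time and $O(\kappa)$ work. Correctness of these writes needs no combining: two matched edges are never adjacent, so each vertex is written at most once. Phase four has each incidence processor $(v,e)$ check whether $v$ is newly completed and, if so, write a positive value into the status of $e$. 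Concurrent writes add up, and any positive value means inactive, so this is $O(1)$ time and $O(\kappa)$ work. In phase two, the update of $T[v]$ and the increment of $P[e]$ is a single Sum-CRCW concurrent write, again $O(1)$ time. Each increment of $P[e]$ comes from a distinct vertex $v\in e$, so after the reset $P[e]$ counts exactly the vertices whose heaviest incident edge is $e$.

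\textbf{Main step: the per-vertex argmax.} The only nontrivial operation is that every active vertex finds $\mathrm{argmax}_{e\in E(v),\,e \text{ active}} w(e)$ over at most $\Delta$ candidates, using its $\deg(v)$ processors. Here we invoke the classical doubly-logarithmic maximum algorithm of Valiant / Shiloach--Vishkin. On a CRCW PRAM, the maximum of $k$ values can be computed in $O(\log\log k)$ time with $O(k)$ work. It works as follows:
\begin{itemize}
\item split the input into groups of size about $\sqrt{k}$;
\item recursively compute each group maximum;
\item combine the $\sqrt{k}$ group maxima with the constant-time all-pairs comparison, which uses $(\sqrt{k})^2=k$ processors.
\end{itemize}
The recurrence $T(k)=T(\sqrt k)+O(1)$ gives $O(\log\log k)$ time. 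A base case that first reduces the input by sequential blocks of size $\log\log k$ keeps the work linear.

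Inactive edges are handled by treating their weight as $-\infty$. The all-pairs step needs Common CRCW writes, where every losing element writes the same flag value $1$. Under summation we can instead write increments and test for zero, so the Sum-CRCW model simulates it. Ties occur with probability zero under continuous random weights, or can be broken by edge id.

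Applying this to all vertices simultaneously costs $O(\log\log\Delta)$ time and $\sum_v O(\deg v)=O(\kappa)$ work. Summing over all phases gives $O(\log\log\Delta)$ time and $O(\kappa)$ work per round, which proves the claim.

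The step we expect to require the most care is this argmax. Two details have to be checked: that the work of the doubly-logarithmic algorithm stays linear in $\deg(v)$ rather than becoming $\deg(v)\log\log\deg(v)$, and that its Common-CRCW writes really can be emulated by the summation rule. The remaining phases are direct constant-time steps.
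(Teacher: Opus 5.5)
Your proposal is correct and follows the same route as the paper. Both go phase by phase: randomization, the $P[e]$ increments, marking matched edges, and deactivation are each $O(1)$-time steps with $O(\kappa)$ total work, and the only nontrivial step, the per-vertex argmax, uses the Valiant/Shiloach--Vishkin doubly-logarithmic maximum in $O(\log\log\Delta)$ time and linear work. You are more explicit than the paper in two places: you charge each vertex only $O(\deg(v))$ work and you emulate the Common-CRCW loser flags under the summation rule. Your aside that $P[e]$ counts exactly the vertices whose heaviest edge is $e$ overlooks the guard $T[v]\neq e$ in the pseudocode, but that concerns the algorithm's correctness, not the time/work bound claimed here.
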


\begin{proof}
    One round of Algorithm~\ref{alg:para:hypermatching:crcw} consists of four sub-phases. \\
    In the first subphase, using $m$ PEs, randomization can be done in parallel in $O(1)$ time, incurring $O(m)$ work.
    In the second subphase, we find the maximal adjacent edge in~$O(\log{\log{\Delta}})$ time using up to $\Delta/\log{\log{\Delta}}$ processors per vertex incurring in total~$O(\kappa)$ work, as described in~\cite{doi:10.1137/0204030}. Our naive implementation in the experiments finds this in~$\Delta$ steps using $O(n)$ processors.
    The subsequent part of the loop performs a concurrent write by incrementing $P[e]$, where $P$ is an array that stores the number of vertices for which an edge~$e$ is maximal. This loop requires $O(1)$ time while performing $O(\kappa)$ work.
    In the third subphase, using up to $\kappa$ PEs, marking vertices as newly completed can be done in $O(1)$ while using $O(\kappa)$ work. A concurrent write here is not needed since there is only one locally maximal edge per vertex.
    In the fourth subphase, using up to $\kappa$ PEs, the deactivation of edges may lead to concurrent writes and results in $O(1)$ time, incurring $O(\kappa)$ work.
    This results in an overall parallel running time of $O(\log{\log{\Delta}})$ using up to $n\Delta/\log{\log{\Delta}}$ PEs and a total amount of work that is bounded by $O(\kappa)$.
\end{proof}

The following claim shows that the number of rounds is only dependent on the number of hyperedges, completing the proof. %

\begin{claim}\label{thrm:rounds}
    Assume the edge weights are distributed uniformly w.h.p. and unique. For hypergraphs, the number of rounds required by the local max algorithm  is bounded by $O(\log{m})$ w.h.p. %
\end{claim}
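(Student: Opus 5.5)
We reduce the statement to the analysis of Luby's algorithm on the line graph $L(H)$. An edge $e$ is added in a round exactly when its random weight exceeds those of all its active neighbours, and every vertex of $e$ agreed on it. Vertex $v$ selects the heaviest active edge of $E[v]$; if $e$ is heaviest among all active edges meeting $e$, then every $v\in e$ selects $e$, so $P[e]=|e|$. Conversely, $P[e]=|e|$ forces $e$ to be a local maximum in $L(H)$.

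After such edges are added, all of their neighbours become inactive. One round of Algorithm~\ref{alg:para:hypermatching:crcw} is therefore exactly one round of Luby's algorithm (the random-priority variant) on the subgraph of $L(H)$ induced by the active edges: local maxima join the independent set and their neighbourhoods are deleted. Because weights are redrawn uniformly and independently each round, rounds are independent conditioned on the current active set. $L(H)$ has $m$ vertices, so it suffices to show that Luby's algorithm on an $m$-vertex graph terminates in $O(\log m)$ rounds w.h.p.

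For the per-round progress we use the edge-counting argument of Métivier et al., which avoids degree bounds. Let $G$ be the active line graph with edge set $F$. For an ordered pair $(u,x)$ with $\{u,x\}\in F$, consider the event that $u$ has the largest weight in $N(u)\cup N(x)$. This event has probability $1/|N(u)\cup N(x)|\ge 1/(\deg u+\deg x)$, and when it occurs $u$ is selected and $x$ is removed. Moreover, the $\deg x$ edges of $G$ incident to $x$ disappear.

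For fixed $x$, the events over different $u$ are disjoint, since at most one vertex of $N(x)$ can be the unique maximum of a set containing $N(x)$. Weighting by $\deg x$ and summing over ordered pairs gives
\begin{align*}
\mathbb{E}[\text{removed edges}] \ge \frac12\sum_{\{u,x\}\in F}\left(\frac{\deg x}{\deg u+\deg x}+\frac{\deg u}{\deg u+\deg x}\right)=\frac{|F|}{2}.
\end{align*}
The factor $1/2$ compensates for an edge being counted from both endpoints. Hence the expected number of remaining line-graph edges shrinks by a constant factor per round.

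Since $|F|\le m^2$, after $c\log m$ rounds the expected number of remaining edges is below $m^{-k}$. By Markov's inequality, $F=\emptyset$ with probability $1-m^{-k}$. Once $F$ is empty, every active hyperedge is isolated in $L(H)$ and is a local maximum, so all remaining hyperedges are matched in one more round. The number of rounds is therefore $O(\log m)$ w.h.p.

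The main obstacle is making the correspondence with Luby rigorous under the claim's assumptions. First, uniqueness of weights is needed so that "largest in $N(u)\cup N(x)$" is well defined and the events are disjoint. Second, the expectation argument must be iterated by conditioning on the history via the per-round independence of the weights. With these in place, combining the result with Claim~\ref{thrm:crcw:work_per_round} yields the theorem.
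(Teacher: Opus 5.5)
Your proof is correct, and its reduction step is the paper's. The paper also passes to the line graph $L(H)$ on $m$ vertices. It observes that a hyperedge is matched exactly when it outweighs every active hyperedge sharing a vertex with it, and it identifies each round with a round of Luby's algorithm on the active part of $L(H)$.

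Where you differ is the round bound itself. The paper stops at the reduction and cites Luby's $O(\log|V_L|)$ w.h.p.\ bound as a black box. You instead prove it with the M\'etivier et al.\ edge-counting argument:
\begin{itemize}
\item for fixed $x$, the events that $u$ is the maximum of $N(u)\cup N(x)$ are disjoint, so the expected number of line-graph edges at least halves per round;
\item iterating the conditional expectation gives $\mathbb{E}|F_t|\le m^2 2^{-t}$, so by Markov $F=\emptyset$ after $(k+2)\log_2 m$ rounds with probability at least $1-m^{-k}$;
\item one last round then matches the remaining, pairwise disjoint, hyperedges.
\end{itemize}

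The citation buys brevity. Your route gives a self-contained proof for exactly the variant the algorithm runs: priorities redrawn independently every round, local maxima selected. So nobody has to check which version of Luby's analysis the citation actually covers. It also pins down where each hypothesis of the claim is used: distinct weights for the disjointness of the events, and fresh independent weights per round for the conditioning across rounds. Finally, it makes the last ``isolated hyperedges'' round and the constants hidden in $O(\log m)$ explicit, which the paper's one-line appeal to Luby leaves implicit.
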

\begin{proof}
    Consider the line graph $L=(V_L,E_L)$, such that there is one vertex in $V_L$ for every hyperedge in $H$ and two vertices are connected if the represented hyperedges share at least one vertex in the hypergraph.
    Note that the number of vertices $\left|V_L\right|$ is $m$.
    Our algorithm is now equivalent to Luby's algorithm~\cite{luby1985simple}~on~$L$. This algorithm identifies locally maximal vertices in the (line) graph and removes them and their neighbors.
    Luby's algorithm requires $\log{|V_L|}$ rounds w.h.p., assuming the edge weights are distinct.
    A hyperedge is selected if it has the highest weight out of all its neighbor edges in our algorithm, which also happens in Luby's algorithm, if a vertex has higher weight than its neighbors (aka conflicting hyperedges).
    Because $n'=m$, we can therefore conclude that there are $O(\log{m})$ rounds w.h.p..
\end{proof}

\paragraph{Memory Consumption.} Our algorithm requires the hypergraph to be present in memory. The representation of this is in $O(\kappa)$, the sum over all vertex degrees.  We additionally need to store the status per edge and node and $P$, which are all in~$O(m)$ or~$O(n)$, which  are in $O(\kappa)$.
\paragraph{CREW PRAM Algorithm.} \label{section:CREW}
The CRCW algorithm can be converted into a CREW algorithm by temporarily storing information about locally maximal edges and processing it in an additional step. The deactivation of hyperedges must also be carried out in further steps in order to avoid concurrent writes.
They are implemented using prefix-sum reductions which take $\log{}$ time and linear work, see Section~\ref{sec:details:crew}.
This results in the running time and work claimed in the following theorem. The proof can be found in Appendix~\ref{app:crew:proof}.

\begin{theorem}
    The CREW PRAM algorithm runs in $O((\log{\Delta} + \log{d}) \log{m})$ time and $O(\kappa\log{m})$ work \hbox{w.h.p.}
\end{theorem}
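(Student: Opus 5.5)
The proof mirrors the one for the Sum-CRCW theorem. We bound the cost of a single round of the CREW variant, and then multiply by the number of rounds. Claim~\ref{thrm:rounds} gives $O(\log m)$ rounds w.h.p., and it applies unchanged: the CREW variant selects exactly the same locally maximal hyperedges in each round as Algorithm~\ref{alg:para:hypermatching:crcw}, so it is still Luby's algorithm on the line graph $L(H)$ with $m$ vertices. The remaining work is to show that one round costs $O(\log\Delta+\log d)$ time and $O(\kappa)$ work without concurrent writes.

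Per round, we go through the four sub-phases and remove each concurrent write with a reduction tree or a prefix sum.
\begin{enumerate}
\item Reassigning random weights is exclusive per edge. It takes $O(1)$ time and $O(m)$ work.
\item Each vertex $v$ finds its heaviest active incident edge with a binary-tree maximum over $E[v]$. This takes $O(\log\Delta)$ time and $O(\deg(v))$ work, so $O(\kappa)$ in total. The vertex records the result only in its own cell $T[v]$, so the write is exclusive.
\item Instead of incrementing $P[e]$ concurrently, each edge $e$ checks all of its vertices: it reads $T[v]$ for every $v\in e$ (concurrent reads are allowed) and AND-reduces the tests $T[v]=e$ over a tree of depth $O(\log d)$. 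This costs $O(|e|)$ work, so $O(\kappa)$ in total. If $e$ is locally maximal, it marks its vertices as newly completed. These writes are exclusive because at most one matched edge contains any vertex.
\item Deactivation is also done from the edge side. Each active edge $e$ OR-reduces the flags ``newly completed'' of its vertices in $O(\log d)$ time and writes its own status. Each vertex then ORs the status of its incident edges, or simply marks itself inactive, in $O(\log\Delta)$ time.
\end{enumerate}
Summing the phases gives $O(\log\Delta+\log d)$ time and $O(\kappa)$ work per round.

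Multiplying by the $O(\log m)$ rounds w.h.p. gives $O((\log\Delta+\log d)\log m)$ time and $O(\kappa\log m)$ work w.h.p. The termination test ``all edges inactive'' is an OR over $m$ flags. This costs $O(\log m)$ time per round, which would add $O(\log^2 m)$ overall, so we must absorb it. We can fold the test into phase 4, because each edge writes its own flag there. We can also check for remaining active edges with a prefix sum only over the compacted set of edges active in this round.

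I expect this termination and compaction bookkeeping to be the main obstacle: keeping the per-round cost inside $O(\log\Delta+\log d)$ rather than incurring $\log m$ or $\log n$ terms. My first attempt is to argue that global termination detection can be pipelined or charged against the degree-based phases. If that fails, I would fall back to noting that $\log m\le\log(n\Delta)$, so the extra cost can be bounded together with the stated terms. The edge-driven reformulation of phases 3 and 4 is straightforward once the CRCW increment is replaced by the per-edge check.
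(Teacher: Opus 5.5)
Your per-round accounting and your use of Claim~\ref{thrm:rounds} match the paper's proof. The paper splits a round into six exclusive-write sub-phases, each with $O(\kappa)$ work: random weights in $O(1)$; a MAX-reduction over $E[v]$ in $O(\log\Delta)$; a SUM-reduction of the tests $T[v]=e$ over $v\in e$ in $O(\log d)$, compared with $|e|$; exclusive marking of newly completed vertices; an OR-reduction over $v\in e$ in $O(\log d)$ to deactivate edges; and marking vertices inactive. It then multiplies by the $O(\log m)$ rounds. Your AND-reduction is its SUM-and-compare, and merging sub-phases is cosmetic. The paper never charges for the while-loop test, and you are right that on CREW it is not free. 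The OR of $m$ bits needs $\Omega(\log m)$ time on a CREW PRAM however many processors are used (Cook--Dwork--Reischuk). Testing after each of $R$ rounds therefore costs $\Omega(R\log m)$, which is not $O((\log\Delta+\log d)\log m)$ in general: on a long path $\Delta=d=2$, yet $R$ still grows with $m$ w.h.p.

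Your remedies, however, do not close this. \emph{Folding the test into phase~4} only gives each edge its own flag; deciding whether any flag is still set is the same global OR. \emph{A prefix sum over the compacted active set} costs $\Theta(\log m_t)$ in round $t$, plus as much again to build the compacted set. Geometric decay of $m_t$ only makes $\log m_t$ shrink linearly in $t$, so over $\Theta(\log m)$ rounds this yields nothing better than $O(\log^2 m)$. This is the same effect that gives the paper's work-optimal variant its $O((\log m+\log n)\log m)$ time. \emph{The fallback $\log m\le\log(n\Delta)$} brings in $\log n$, which is not $O(\log\Delta+\log d)$, so it only proves $O((\log n+\log\Delta+\log d)\log m)$. \emph{Charging against the degree-based phases} fails because those phases may take only $O(1)$ time.

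Your pipelining idea can be made to work, but the simplest fix is to stop testing every round. Rounds after termination are no-ops, so run $\lceil c\log m\rceil$ rounds blindly, with $c$ from Claim~\ref{thrm:rounds}, and then verify once with an $O(\log m)$-time OR-reduction, repeating the batch in the unlikely case that active edges remain. Two alternatives also work: test only every $\lceil\log m/(1+\log\Delta+\log d)\rceil$ rounds, or start one OR-tree per round on a snapshot of the monotone inactivity flags and advance all in-flight trees concurrently. Each variant adds at most $O((1+\log\Delta+\log d)\log m)$ time and $O(m\log m)\subseteq O(\kappa\log m)$ work, so the stated bounds stand.
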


\subsubsection{Work-Optimal CREW PRAM Algorithm.}\label{sec:work_optimal_crew_algorithm}
The algorithms presented in previous sections are designed to achieve a low theoretical running time with $O((\log{\Delta} + \log{d}) \log{m})$, while resulting in a suboptimal work of $O(\kappa  \log{m})$. This section describes a version of this algorithm with a running time of $O((\log{m} + \log{n}) \log{m})$ which only incurs $O(\kappa)$ work. This demonstrates a trade-off between running time and the total work performed.

To achieve a work-optimal algorithm, in each round, all marked edges are removed from the hypergraph. Otherwise, the algorithm proceeds analogously to the CREW PRAM model from Section~\ref{section:CREW} until all edges to be removed are marked inactive and the round in the CREW PRAM model is finished. At the end of each such round, the hypergraph representation has to be updated to compute the input hypergraph for the next round, i.e.,~the respective hyperedges have to be removed. This can be done in $O(\log{m} + \log{n})$ time, which is used by the following theorem.

\begin{theorem}
    The work-optimal CREW PRAM algorithm runs in $O((\log{m} + \log{n}) \log{m})$ time and $O(\kappa)$ work \hbox{w.h.p.} %
\end{theorem}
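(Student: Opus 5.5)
The plan is to split the cost into per-round time and per-round work, and to make the work of each round proportional to the size of the \emph{current} hypergraph. Then we show that this size shrinks geometrically in expectation (and w.h.p.). Time is the easy part. By Claim~\ref{thrm:rounds}, there are $O(\log m)$ rounds w.h.p. Each round runs the CREW steps of Section~\ref{section:CREW}; every prefix sum or reduction over at most $\kappa \le nm$ entries takes $O(\log \kappa) = O(\log m + \log n)$ time. The compaction step at the end of a round, which removes inactive hyperedges and vertices and rebuilds the adjacency arrays with prefix sums over the remaining incidences, also takes $O(\log m + \log n)$ time. Multiplying gives $O((\log m + \log n)\log m)$.

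For the work, I let $\kappa_i$ denote the sum of the edge sizes of the hypergraph entering round $i$. With Brent-style scheduling, each round (selection, marking, deactivation, compaction) can be done in $O(\kappa_i)$ work. The total work is then $\sum_i \kappa_i$, and it suffices to show $\mathbb{E}[\kappa_{i+1}] \le c\,\kappa_i$ for a constant $c<1$.

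The key step is the standard Luby-type edge-removal argument on the line graph $L$, adapted to weight each hyperedge by its size. I would charge each incidence $(v,e)$ with $v\in e$ to the vertex $v$. Vertex $v$ becomes inactive, killing all its incident hyperedges, if some hyperedge containing $v$ is locally maximal. Following Métivier et al.\ / the ``Luby removes half the edges'' argument, consider a hyperedge $f$ at $v$ and its neighbourhood $N(f)$ in $L$. With probability $1/(|N(f)|+1)$, $f$ has the maximum random weight in $N(f)\cup\{f\}$. In that case, every hyperedge adjacent to $f$ is removed. These events are disjoint across the $f$ that are maximal in their closed neighbourhood. Summing, with a charging scheme over ordered pairs in $L$, should show that a constant fraction of line-graph adjacencies are removed in expectation.

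The main obstacle is converting this into a constant-fraction decrease of $\kappa_i$ rather than of $|E_L|$ or of the number of hyperedges. I would first try to charge each incidence $(v,e)$ to the event ``the maximum-weight hyperedge in $E[v]\cup N(E[v])$ is a neighbour of $e$''. If that fails to give constant probability uniformly, the fallback is a weaker bound $\mathbb{E}[\kappa_{i+1}]\le(1-1/O(d))\kappa_i$. Alternatively, one can bound work by $\kappa$ times the geometric decrease of $m_i$, with each surviving edge of size at most $d$, giving $O(\sum_i d\, m_i)$. Finally, a Chernoff/Markov argument across rounds, combined with the $O(\log m)$ round bound, converts the expected geometric decrease into $O(\kappa)$ total work w.h.p.
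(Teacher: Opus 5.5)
\paragraph{Verdict.} Your time bound is correct and follows the paper's argument: $O(\log m+\log n)$ per round for the CREW phases and the prefix-sum compaction, times the $O(\log m)$ rounds of Claim~\ref{thrm:rounds}. The work bound has a genuine gap, located exactly at the step you flag as the main obstacle.

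\paragraph{What is missing in your work bound.} You reduce everything to $\mathbb{E}[\kappa_{i+1}]\le c\,\kappa_i$ with a constant $c<1$, but this is never established.
\begin{itemize}
\item Your charging event (the heaviest hyperedge in $E[v]\cup N(E[v])$ is a neighbour of $e$) does not imply that this neighbour is selected. It must also beat its own neighbours outside that set.
\item The fallback $(1-1/O(d))$-decay only sums to $O(d\kappa)$.
\item The bound $O(\sum_i d\,m_i)$ is $O(dm)$ even if $m_i$ decays geometrically. This can exceed $\kappa$ by a factor $\Theta(d)$, for example one hyperedge of size $d$ together with $m-1$ hyperedges of size $2$. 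Moreover, Luby's argument shrinks the \emph{edge set} of the line graph, not $m_i$ or $\kappa_i$.
\item The final Markov/Chernoff upgrade is also unsupported. Markov only gives constant probability. Chernoff needs many independent small contributions, whereas a single hyperedge can carry a constant fraction of $\kappa$.
\end{itemize}

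\paragraph{The missing inequality is false.} So no charging scheme can close the gap.
\begin{itemize}
\item \emph{One round.} Let $r$ consist of $s$ private vertices plus vertices $v_1,\dots,v_B$. Let each $c_i$ contain $v_i$ and $B^2$ further vertices, each shared with its own leaf hyperedge, so the line graph is a tree. In a round, $r$ is selected with probability $1/(B+1)$ and each $c_i$ with probability $1/(B^2+2)$. Hence $r$ survives with probability at least $1-2/B$. With $s=B^4$ this gives $\mathbb{E}[\kappa_2]\ge(1-O(1/B))\kappa_1$.
\item \emph{Many rounds.} Iterate the construction: depth-$\ell$ nodes get $B^{\ell+1}$ children, the tree has depth $D$, and $B=D^2$, $s\approx m$. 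Each round essentially removes only the two bottom levels. Every higher node is selected, or has a selected neighbour, only with probability $O(1/B)$.
\item \emph{Consequence.} $r$ survives $\Omega(D)$ rounds with probability $1-O(1/D)$. The work, which is at least $\sum_i\kappa_i\ge s\cdot\Omega(D)$, is therefore $\Omega(\kappa\sqrt{\log m/\log\log m})$, since $\log m=\Theta(D^2\log D)$.
\end{itemize}

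\paragraph{Relation to the paper's proof.} The paper gets past this point only by asserting that work and size ``decrease geometrically following Luby''. That is the same conflation of line-graph edges with $\kappa_i$. Your suspicion is justified: neither your argument nor the paper's yields $O(\kappa)$ work. What your per-round analysis does support is $O(\kappa\log m)$ work w.h.p.
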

\begin{proof}
    We show that the compactification of a hypergraph can be done in $O(\log{m}+\log{n})$ time with $O(\kappa)$ work after each round of the CREW algorithm:
    In order to remove marked vertices and edges, a parallel prefix sum is used. Therefore, using either $n$ or $m$ PEs, the prefix sums can be done in $O(\log{n} + \log{m})$ time and with $O(n + m)$ work total. Also, using up to $\kappa$ PEs, the data structures (see Section~\ref{sec:details:optimal}) can be rebuilt in $O(\log{\Delta} + \log{d})$ with $O(\kappa)$ work.
    In summary, this results in a running time of $O(\log{m} + \log{n})$ for one round of the algorithm, while the work complexity remains at $O(\kappa)$.
    According to Claim~\ref{thrm:rounds}, the number of rounds is w.h.p. bounded by $O(\log{m})$, which implies a total running time of $O((\log{m} + \log{n}) \log{m})$.
    The work and size of the hypergraph decrease geometrically following the proof of number of rounds (Theorem~\ref{thrm:rounds}) using Luby's~\cite{luby1985simple} algorithm, leading to an overall complexity of $O(\kappa)$ w.h.p.
\end{proof}
\paragraph{Extension to MapReduce.}
Each round in our CRCW algorithm can be transformed into several map-reduce steps.
Let $M$ be the size of the memory of each compute node.
Goodrich~et~al.~\cite{goodrich2011sorting} state that any Sum-CRCW algorithm,  that runs with $P$ processors and $N$ memory cells in $T$ time, can be simulated in $T\log_{M}{P}$ rounds with $T(N + P) \log_{M}{(N + P)}$ communication complexity.
We use up to \hbox{$N=P=n\Delta/\log{\log{\Delta}}$} cells/PEs in the parallel arg-max operation, so  $O(\log{\log{\Delta}}\log{m}\log_{M}{n\Delta/\log{\log{\Delta}}})$ rounds and $O(\log{\log{\Delta}}\log{m}(n\Delta/\log{\log{\Delta}})+ \log_{M}{(n\Delta/\log{\log{\Delta}})})$ communication complexity are needed. %

\subsubsection{Extension to External Memory.}
According to the geometrically decreasing input size, using the work-optimal CREW algorithm, the PRAM emulation techniques from Chiang~et~al.~\cite{DBLP:conf/soda/ChiangGGTVV95} allow for an implementation in the external memory~\cite{DBLP:journals/cacm/AggarwalV88} and cache-oblivious~\cite{DBLP:conf/focs/FrigoLPR99} models in $O((\log{m}+\log{n}) \mathrm{sort}({\kappa}))$ I/O operations, where $\mathrm{sort}(N)$ is defined as the number of operations for sorting $N$ elements using the parameters in the external memory model. %

\section{Implementation Details}\label{sec:details}
This section discusses the implementation details of our algorithms.
\subsection{Sequential Algorithm}\label{sec:details:sequential_algorithm}
\noindent

As the foundation for this approach, we propose a specialized data structure for representing hypergraphs. Accordingly, we utilize two arrays, $V_{id}$ and $V_{p}$, to store vertex-related information, and two arrays, $E_{id}$ and $E_{p}$, to store edge-related information. The indices of $V_{id}$ are interpreted as vertex identifiers. Moreover, the range $V_{id}[i]$ to $V_{id}[i+1]-1$ with $i \in \{0, \dots, n-1\}$ indicates a consecutive block of indices in $V_{p}$, each corresponding to an adjacent edge of vertex $i$. Furthermore, the values in $V_{p}$ contain the identifiers for these edges and refer to entries in $E_{id}$ themselves. The same structure is also applied to $E_{id}$ and $E_{p}$, where the range $E_{id}[j].first$ to $E_{id}[j+1].first - 1$ with $j \in \{0, \dots, m-1\}$ specifies a consecutive block of indices in $E_{p}$. Each of the corresponding values in $E_{p}$ identifies a vertex contained in edge $j$, and refers to an index of $V_{id}$. A second tuple entry $E_{id}[i].second$ contains any additional edge-metadata such as weights and flags. Similarly, a second tuple entry $V_{id}[i].second$ also contains additional vertex-metadata such as flags. For simplification, we refer to the combined representation $(V_{id}, V_{p})$ as $V$, and the pair $(E_{id}, E_{p})$ as $E$. Using these data structures allows efficient traversal in both directions by either iterating over edges adjacent to a certain vertex or over all vertices contained in an edge.

Before the algorithm initiates its computation, the weights are reconfigured by adding small noise values. For integer weights, a suitable approach is to generate floating-point random values in the interval $(0,1)$ which are mostly distinct w.h.p. %
In the unlikely event of conflicting equal weights, we use tie breakers by adding additional noise values.

Moreover, we allocate  two arrays, which store the IDs of the edge with the highest weight per vertex and the number of vertices for which an edge is maximal.
The remaining part of the algorithm is round-based and is performed until each edge is either part of the matching $M$ or marked inactive. Therefore, only active edges are considered at any step in the algorithm, which is assumed in the following description of such a round.
To start with, the edge with the locally maximal weight is identified for each vertex by a simple loop over all adjacent edges and the result is stored in an array $T$ accordingly. This is necessary as we are using this information in the following rounds to determine whether a vertex found a new maximal edge. %
Because if exactly that is the case (which it naturally is for the first round), we increase the pin counter $P[e]$ for an edge $e$ by one.
Therefore, to determine locally maximal edges, the algorithm compares the current pin count of an edge to its size. In this case, the edge is added to the matching and all of its adjacent edges, as well as contained vertices, \hbox{are marked inactive.}

\subsection{Parallel Algorithms}
In the following, we highlight the differences of Algorithm~\ref{alg:para:hypermatching:crcw} to the sequential model.
Specifically, we  transformed the outer \textbf{for}-loops into parallel loops. For iterations over the set of edges, the algorithm assumes the use of $m$ PEs.
Similarly, iterations over the set of vertices are parallelized using $n$ PEs.

\subsubsection{CRCW PRAM Algorithm}
To resolve the necessary processor communication, the increment operation for counting the nodes for which an edge is maximal must be performed atomically to prevent race conditions. Conflicts may also arise during the process of marking edges as inactive. However, since all processors attempt to set the same flag, these conflicts are benign and do not affect correctness.

\subsubsection{CREW PRAM Algorithm}\label{sec:details:crew}
This section provides a more detailed description of the CREW algorithm which consists of six sub-phases per round of computation.
In the first phase, random weights are assigned to the hyperedges, following the assumptions established in Section~\ref{sequ:description}.
In the second phase, every vertex identifies the adjacent edge with the biggest weight, which is temporarily stored in the array $T$. This information is then used in the third phase to determine the locally maximal edges, mark them as newly completed, and add them to the matching. In the fourth phase, vertices adjacent to matched edges are marked as newly completed, followed by the fifth phase, where we use a parallel OR-reduction to mark edges which contain newly completed vertices as inactive. In the final sixth phase, newly completed vertices are also marked as inactive. This write is non-concurrent, because every vertex can only be part of one newly completed edge.
These phases are repeated until all edges are marked inactive or as in the matching.

\subsubsection{Work-Optimal CREW PRAM Algorithm}\label{sec:details:optimal}
In the following, we describe in detail the compaction of a hypergraph, which is necessary for the work-optimal CREW algorithm. %

When  marked edges are removed, the information about matched edges has to be stored in an array $M$ for collecting all selected edges before they are excluded from the data structure. Since edges are actually removed and not just marked inactive, the \textit{active} constraint can be dropped for all loops. Otherwise, the algorithm proceeds analogously to the CREW PRAM model from Section~\ref{section:CREW} until all edges to be removed are marked inactive and the round in the CREW PRAM model is finished. At the end of such a round, the compaction of the hypergraph representation is initialized to prepare for the next round.

This is done by removing all marked vertices and edges from $V_{id}$ and $E_{id}$, respectively, and updating the pointers in $V_{p}$ and $E_{p}$. To inquire the new position for each vertex $V[i]$, we compute the number of entries $V[j], j < i$, marked inactive. Like before, we can perform parallel (exclusive) prefix sums on the inactive flags in $V_{id}$, which are stored in $P_V$. Furthermore, we declare a temporary storage for the compacted graph $V^\prime$ and copy the new degree of each vertex to the new positions, which are calculated by running the same scheme as for parallel prefix sums with $MAX$ as the associative operator on its neighbors. To now compute the new indices for $V^\prime_{p}$ in $V^\prime_{id}$, we perform another parallel (exclusive) prefix sum on $V^\prime_{id}$. Since our data structure for the hypergraph representation is symmetric, the same is applied to the edge set $E$.
To adjust the pointing variables in $V^\prime_{p}$ to the new positions, the algorithm iterates in parallel over the active neighbors of all vertices by taking the indices from $V_{id}$ and reading the old edge IDs from $V_{p}$. Since we already computed the new positions for active edge IDs with the prefix sum $P_E$, the value is taken from there and then written to the new index in $V^\prime_{p}$, which is derived from the already finished array $V^\prime_{id}$. Once again, we apply the same procedure in the case of the edges to $E^\prime_{p}$.
Moreover, we copy the compacted results in $V^\prime$ and $E^\prime$ to the old hypergraph representation $H = (V, E)$, which finalizes the preparation for the next round.

\section{Experimental Evaluation}\label{sec:exp}
This section reports experimental results on computing  maximal (cardinality and weighted) weighted hypergraph and unweighted graph matchings.
We compare our algorithms with the state-of-the-art implementations for graphs on GPUs \cite{mandulak2024efficient} and the CPU~\cite{birn2013efficient,dhulipala2021theoretically}.
For hypergraph matching, there are only CPU implementations available for comparison.
Therefore, we can only compare with the streaming algorithms by \cite{reinstadtler2025semi} using the CPU.
The implementations are benchmarked across a diverse set of hypergraph and graph instances.

Section~\ref{sec:exp:hypergraph} covers  experiments on randomly weighted hypergraphs, while Section~\ref{sec:exp:graph} contains the experiments for maximal cardinality matchings on graphs.

\subsection{Methodology}
We used a machine equipped with a NVIDIA RTX 4090 and  a 16-core Intel Xeon w5-3435X running at 3.10 Ghz having a L3 cache of 45 MB and 128 GB~RAM.

Each algorithm was executed three times per instance, with the total set of executions randomly ordered.
No experiment was scheduled in parallel.
The performance metric for each instance is computed using the arithmetic mean, and the geometric mean is applied when averaging across instances, thereby ensuring that each instance contributes comparably to the final result. While theoretical models such as PRAM assume an unlimited number of processors, in practice, a considerably smaller number of threads is launched and dynamically scheduled by the framework (Kokkos or CUDA) onto the limited physical processors available. Accordingly, the algorithms were configured to utilize the maximum number of cores available on the hardware. Since the hypergraph is assumed to be preloaded for a sequence of algorithm runs within a larger computation, the runtime for reading the hypergraph into the data structure, and transferring data to the GPU is excluded from the performance comparison.
The generation of new random noise per round is included in the running times.

\paragraph{Instances.}
The set of hypergraphs was collated from a set of instances created by Gottesbüren~et~al.~\cite{gottesburen2023}. Specifically, we use 90 of 94 instances from the large-sized \textsf{$L_{HG}$} set. These originate from the \emph{SAT competition 2014}~\cite{sat2014}, more general matrices from the \emph{SuiteSparse collection}~\cite{davis2011}, and the \emph{DAC challenge}~\cite{viswanathan2012}. Four instances were too big to fit onto the RTX 4090 memory.
For weighted experiments, we assign random weights between 1 and 100 to the hyperedges.
We add the weighted wiki instance from Reinstädtler~et~al.~\cite{reinstadtler2025wiki}.
A set of~42 graphs was obtained from \emph{10th DIMACS Challenge}~\cite{DBLP:conf/dimacs/2012}.
This collection includes \emph{Delaunay instances}~\cite{DBLP:conf/ipps/HoltgreweSS10}, generated by randomly selecting $n = 2^{x}$ points in the unit square and computing their Delaunay triangulation. In addition, \emph{random geometric graphs} from the same source are considered, having $n = 2^{x}$ nodes and $\alpha n$ edges with $\alpha \in \{4, 16, 64\}$. The benchmark set further compromises real-world graphs such as street networks and \emph{co-author and citation networks}~\cite{DBLP:conf/alenex/GeisbergerSS08}, which represent social networks. Finally, graphs from the \emph{Florida Sparse Matrix Collection}~\cite{davis2011}~are included.
\paragraph{Compared Algorithms.}
We implemented the algorithms in CREW and CRCW fashion in CUDA and Kokkos.
They add random noise, sampled uniformly from the interval~$[0,100]$, to the edges between each round.
In this section, we  compare our CRCW implementations \competitorOUR using an XORWOW generator~\cite{Marsaglia_XORWOW} and the Kokkos implementations \competitorOURKokkosGPU using GPU, one core~(\competitorOURKokkosOneC) and 16 cores (\competitorOURKokkosSixteenC).
On the Kokkos plattform we use a XORSHIFT random generator~\cite{vigna2016experimentalexplorationmarsagliasxorshift}.
The random number generator can be  switched to a simpler Park-Miller~\cite{DBLP:journals/cacm/ParkM88} or any other~generator.

We compare with the parallel \competitorBIRN algorithm for graphs by Birn et al.~\cite{birn2013efficient}, which is implemented with MPI and kindly provided by the authors. As additional CPU implementation we test against ~\competitorGBBS~\cite{dhulipala2021theoretically}.
Moreover, we compare it with \competitorSUMAC by Mandulak~et~al.~\cite{mandulak2024efficient} which can run on multiple GPUs.
We configured it to run on a single GPU and measured only the raw computation time, excluding initialization costs.
For experiments on hypergraphs, a single-core CPU  \competitorGreedy implementation and a streaming \competitorStreaming  by Reinstädtler~et~al.~\cite{reinstadtler2025semi} is used as a baseline. The streaming approach has a approximation guarantee of $1/(d+\varepsilon)$, we set $\varepsilon=0$ to match the greedy bound. It has a better running time than \competitorGreedy.

All GPU implementations were compiled with nvcc and CUDA 12.8.
For the experiments with \competitorBIRN we used OpenMPI 5.0.7 with gcc 13.0.3.
\subsection{Experiments on Hypergraphs}\label{sec:exp:hypergraph}
\begin{figure}[t]
    \vspace{-0.4cm}
    \centering
    \includegraphics{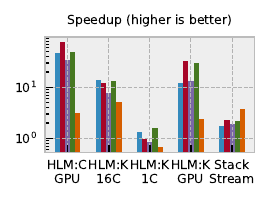}
    \includegraphics{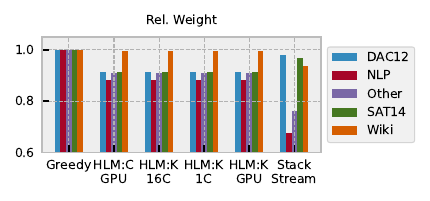}\vspace{-0.4cm}
    \caption{ Results on hypergraphs. Average speedup over \competitorGreedy (left) and relative quality (to the best result) per hypergraph category for our algorithms and \competitorStreaming~\cite{reinstadtler2025semi}.}
    \label{fig:res:hypergraphs}
    \vspace{-0.4cm}
\end{figure}
In Figure~\ref{fig:res:hypergraphs} we show the results of our experiments on hypergraphs against the competitors from~\cite{reinstadtler2025semi}.

\paragraph{Speedup over \competitorGreedy.} On the left side we show the speedup over the \competitorGreedy algorithm, that runs on the CPU on a single core and sorts all edges in-memory.
The \competitorStreaming is between 1.74 and 3.70 times faster than this.
Our one-core run \competitorOURKokkosOneC is slower on three of the categories (up to a factor of 1.5 times slower), but already on the \textsf{DAC12} and \textsf{SAT14} up to 1.61 times faster.
When adding more cores (\competitorOURKokkosSixteenC) the speedup is 13 times.
This shows how well our algorithm scales with respect to the number of cores.
Our Kokkos based GPU implementation  is up to 32 times faster than the baseline.
However, in the Wiki instance it only has a speed up of \numprint{2.38}, which is lower than the speed up of the 16 core CPU implementation.
The weights of the \textsf{Wiki} instance are not uniformly distributed.
Thus, all of our algorithms require 16 rounds to complete this instance instead of 5-7 on the other instances explaining the relative low speed up.
The best results with respect to running time, yielding a speedup between 3.20 and 76.6 times, are returned by our CUDA based implementation (\competitorOUR).
The main difference between the two implementations is the random number generator~used.

\vspace{-0.4cm}
\paragraph{Quality.} Our algorithms reach between \numprint{88.13} and \numprint{99.72}~\% of the weight of the \competitorGreedy matchings. The difference, caused by different random generators, are negligible.
The streaming algorithm (\competitorStreaming) returns only \numprint{67.51}~\% on the \textsf{NLP} and \numprint{76.14}~\% on the \textsf{Other} instances.
The solution quality on the \textsf{Wiki} instance is also worse than of our algorithm, even though it is the only category where the streaming algorithm was faster.
\subsection{Experiments on Graphs}\label{sec:exp:graph}
We conduct experiments on graphs to show that our approach is also feasible and efficient on graphs.
Figure~\ref{fig:res:graphs} shows the speedup  over \competitorBIRN~by  Birn~et~al.~\cite{birn2013efficient} and relative size of \competitorGBBS~\cite{dhulipala2021theoretically}, \competitorSUMAC~\cite{mandulak2024efficient} and our algorithms.
\competitorBIRN uses all 16 physical cores of the host via OpenMPI.
Our CPU implementation using Kokkos is done via OpenMP.
\competitorSUMAC is a GPU implementation of a local-dominant algorithm using a suitor technique. It uses the vertex id as tie breaker, while we use uniform edge weight and added random noise in the interval $[0,100]$. \competitorBIRN also uses random weights.
\begin{figure}[t]
    \centering
    \includegraphics{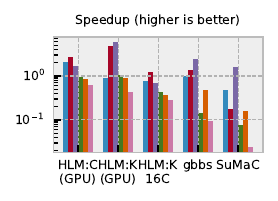}\hspace{-0.25cm}
    \includegraphics{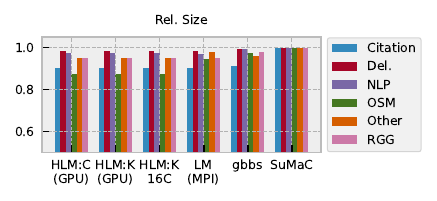}
    \vspace{-0.4cm}
    \caption{Experimental results on graphs. Speedup over \competitorBIRN by \cite{birn2013efficient} (left) and relative matching size  per graph category.}
    \vspace{-0.4cm}
    \label{fig:res:graphs}
\end{figure}

\paragraph{Speedup.} Even though the data structure for accessing hypergraphs is more complex, our parallel CPU implementation (\competitorOURKokkosSixteenC) performs \numprint{1.24} times faster on the \textsf{Delaunay} instances.
On the worst-case (\textsf{RGG}) graphs our implementation is \numprint{3.42} slower on the CPU. \competitorGBBS performs better than our  \competitorOURKokkosSixteenC on four of the categories, while paying performance penalties on \textsf{OSM} and \textsf{RGG}.
Our \competitorOUR implementation is faster in all except one category.

When looking at GPU implementations, we can see that our CUDA variant (\competitorOUR) and Kokkos (\competitorOURKokkosGPU) perform better.
\competitorOUR is faster on Citation (2.06 times), Delaunay (2.75) and NLP (1.70) than \competitorBIRN, which runs on 16 cores on the CPU.
On \textsf{OSM}, \textsf{Other} and \textsf{RGG} \competitorOUR is between  1.08 and 1.61 times slower than \competitorBIRN.
Again caused by the different random generators, the results to \competitorOURKokkosGPU differ and
\competitorOURKokkosGPU is more than twice as fast on \textsf{NLP} graphs. On \textsf{RGG}, \textsf{Citation} networks it is considerably slower than \competitorOUR.

When only comparing GPU algorithms, \competitorOUR   is up to \numprint{15.28} and \competitorOURKokkosGPU up to \numprint{27.16} times faster than \competitorSUMAC on \textsf{Delaunay} graphs.
Only for \competitorOUR the \textsf{NLP} instances from numerics are challenging, here our algorithm is only \numprint{1.04} times faster. These graphs are derived from matrices with~3 distinct bands of non-zeros, making them ideal for \competitorSUMAC using the edge~ids.

\vspace{-0.2cm}
\paragraph{Quality.} Our algorithms reach 87.6 to 98.2~\% of the solution size of \competitorSUMAC on unweighted graphs, while \competitorBIRN~\cite{birn2013efficient} reaches 90 to 97.8~\%, and \competitorGBBS~\cite{dhulipala2021theoretically}~\numprint{91} to \numprint{99.2}~\%.
The strict order given by the ids, used in \competitorSUMAC, seems to produce better matches. We could use these as edge weights as well.

\vspace{-0.4cm}
\section{Conclusion}\label{sec:conclude}
\vspace{-0.4cm}
In this paper, we presented a novel family of algorithms to compute matchings in hypergraphs in a efficient parallel fashion.

The algorithms find locally maximal hyperedges in parallel until all edges are matched or removed.
We show a running time of $O(\log{\log{\Delta}}\log{m})$ with~\hbox{$O(\kappa \log{m})$} work in the CRCW model and provide extensions to CREW, a work optimal version, MapReduce and external memory.
In experiments, our algorithms outperform CPU single-core implementations w.r.t. running time up to a factor of \numprint{76.6} and reaching a quality of 88.13~\% on weighted hypergraphs. In the graph cardinality case, our algorithm \competitorOUR is faster than  \competitorSUMAC on all, \competitorGBBS~\cite{dhulipala2021theoretically} except for one  and compared to~\competitorBIRN~\cite{birn2013efficient} on half of six  categories.

Our algorithm can be easily extended to multi-GPU setups, by partitioning, running it on the interior edges in parallel and handeling cross-edges on a single GPU.
Other future avenues of work include devising, a parallel local search or expanding to the relaxed $b$-matching problem, where multiple edges can be matched per~vertex.
\vspace{-0.3cm}
\begin{credits}
    \footnotesize
    \subsubsection{\ackname} We acknowledge support by DFG grant \hbox{SCHU 2567/8-1}, National Science Foundation grant \hbox{CCF-2432018}.
    \vspace{-0.4cm}
    \subsubsection{\discintname}
    The authors have no competing interests to declare that are
    relevant to the content of this article.
\end{credits}
\bibliographystyle{customttt}%
\vspace{-0.5cm}
\bibliography{lncs-clean}

\clearpage
\appendix
\section{Work per Round on the CREW model}
\label{app:crew:proof}
\begin{theorem}
    The CREW PRAM algorithm runs in $O((\log{\Delta} + \log{d}) \log{m})$ time and $O((\kappa ) \log{m})$ work \hbox{w.h.p.}
\end{theorem}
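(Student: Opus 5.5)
The plan mirrors the CRCW analysis. First I bound the time and work of a single round of the CREW algorithm from Section~\ref{sec:details:crew}, which has six sub-phases, by $O(\log\Delta + \log d)$ time and $O(\kappa)$ work. Then I multiply by the round bound of Claim~\ref{thrm:rounds}. That claim depends only on the local-max selection rule and the per-round random weights, not on the write model, so it applies unchanged and gives $O(\log m)$ rounds w.h.p.

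For the per-round bound I go through the phases in order.

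\emph{Phase one} (random weights): $m$ PEs, $O(1)$ time, $O(m)$ work.

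\emph{Phase two} (each vertex $v$ finds its heaviest active incident edge): without concurrent writes, the $O(\log\log\Delta)$ CRCW max is unavailable. Instead each vertex runs a balanced binary-tree max-reduction over its $\deg(v)$ incident edges, using $\deg(v)$ PEs. This takes $O(\log\Delta)$ time and $\sum_v \deg(v) = \kappa$ work. Inactive edges are treated as $-\infty$. All reads of edge weights are concurrent reads, which CREW allows.

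\emph{Phase three} (detect locally maximal edges): the Sum-CRCW counter $P[e]$ is replaced by an edge-side reduction. For each edge $e$, $|e|$ PEs read $T[v]$ for all $v\in e$ and compute an AND-reduction of the predicates $T[v]=e$. This takes $O(\log d)$ time and $\sum_e |e| = \kappa$ work. The write that puts $e$ into the matching is exclusive, since only the owner PE of $e$ writes it.

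\emph{Phase four} (mark the vertices of matched edges as newly completed): each $v$ can simply check whether $T[v]$ is matched, in $O(1)$ time with $n$ PEs. Alternatively, the matched edge writes to its own vertices; these writes are exclusive because the matched edges are pairwise disjoint.

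\emph{Phase five} (deactivate edges): each edge $e$ performs an OR-reduction over the newly-completed flags of its vertices, in $O(\log d)$ time and $O(\kappa)$ work. Only $e$'s own flag is written, so no concurrent write occurs.

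\emph{Phase six}: each vertex deactivates itself, in $O(1)$ time.

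Summing gives $O(\log\Delta+\log d)$ time and $O(n+m+\kappa)=O(\kappa)$ work per round. The last equality uses that degree-$0$ vertices are removed, so $n\le\kappa$, and that every edge is nonempty, so $m\le\kappa$.

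The main obstacle is making sure that no step hides a concurrent write. The CRCW algorithm has two: the incremented $P[e]$ and the inactive marking in Line~\ref{crcw:line:set_edges_inactive}. The key idea is to turn both from vertex-driven scatters into edge-driven gathers (reductions over $e$'s vertex list), which the symmetric $(E_{id},E_p)$ representation supports directly. I also need to check that the reductions use a fixed PE assignment. For that I assign one PE per entry of $V_p$ and $E_p$, so that $O(\kappa)$ PEs suffice and the reduction trees can be laid out over contiguous blocks. Finally, multiplying by $O(\log m)$ rounds w.h.p.\ gives $O((\log\Delta+\log d)\log m)$ time and $O(\kappa\log m)$ work w.h.p. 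Since edges are only soft-deleted, the per-round work stays at $O(\kappa)$ rather than shrinking, which explains the $\log m$ factor in the work.
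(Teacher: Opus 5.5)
Your proposal is correct and follows the paper's proof essentially phase by phase: a per-round bound of $O(\log\Delta+\log d)$ time and $O(\kappa)$ work, obtained from a MAX-reduction over each vertex's incident edges, edge-side reductions of depth $O(\log d)$ for detecting locally maximal edges and for deactivation (your AND-reduction is equivalent to the paper's SUM-reduction compared against $|e|$), and exclusive writes in phase four because matched edges are disjoint. You then multiply by the $O(\log m)$ rounds of Claim~\ref{thrm:rounds}, which the paper's appendix proof leaves implicit; your remarks on processor allocation over contiguous blocks and on $n,m\le\kappa$ are small additions the paper does not spell out.
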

\begin{proof}
    One round of Algorithm~\ref{alg:para:hypermatching:crew} consists of six sub-phases. \\
    In the first subphase, using $m$ PEs, randomization can be done in parallel in $O(1)$ time, incurring $O(m)$ work. \\
    In the second subphase, using up to  $\kappa$ PEs, the locally maximal edge for each vertex is found using the same technique as for parallel prefix sums with MAX as the associative operator on each subset of edges that all contain a single vertex. Since $V$ has size $O(\kappa)$, this step takes $O(\log{\Delta})$ time and $O(\kappa)$ work. \\
    In the third subphase, using up to $\kappa$ PEs, counting all vertices for which an edge is maximal in line~\ref{crew:line:mark_matching:sum} can be performed using prefix sums with $\text{SUM}$ as the associative operator. Since $E$ has size $O(\kappa)$, this step takes $O(\log{d})$ while using $O(\kappa)$ work. \\
    In the fourth subphase, using up to $\kappa$ PEs, marking vertices as newly completed takes $O(1)$ time and $O(\kappa)$ work. Since for every vertex, at most one
    edge this vertex is contained in can be marked newly completed, this step involves only exclusive writes. \\
    In the fifth subphase, using up to $\kappa$ PEs, marking edges as inactive in parallel requires a logarithmic OR-reduction to prevent concurrent writes. This is achieved by using parallel prefix sums with OR as the associative operator on each subset of vertices that are contained in a single edge. This step takes $O(\log{d})$ time and $O(\kappa)$ work.
    In the sixth subphase, using up to $n$ PEs, marking vertices as inactive takes $O(1)$ time and $O(n)$ work.
    This results in an overall parallel running time of $O(\log{\Delta} + \log{d})$ while the total amount of work is limited by~$O(\kappa)$.
\end{proof}
\begin{algorithm}[b]
    \caption{CREW PRAM Algorithm for Hypergraph Matching}
    \label{alg:para:hypermatching:crew}
    \begin{algorithmic}[1]
        \PROCEDURE{CREW-LocalMax}{$H=(V,E)$}
        \STATE $T \gets [\bot]\times \left|V\right|$ \COMMENT{Top edge id for every vertex}
        \WHILE{\textbf{not} all edges marked inactive}
        \FOR{$e \in E$  \textbf{active in parallel}} \label{crew:line:weights:begin}
        \STATE $w(e)\gets \mathrm{rnd()}$ \COMMENT{Initialize weights}
        \ENDFOR \label{crew:line:weights:end}
        \FOR{$v \in V$ \textbf{active in parallel}}\label{crew:line:maximal_edges:begin}
        \STATE $e:=\mathrm{argmax}_{e \in E(v), e\text{ active}}(w(e))$ \label{line:argmax}
        \IF{$T[v]\neq e$}
        \STATE $T[v] \gets e$\label{crew:line:maximal_edges:end}
        \ENDIF
        \ENDFOR
        \FOR{$e\in E$ \textbf{active in parallel}}\label{crew:line:mark_matching:begin}
        \STATE $P_e = \sum_{v \in e}\begin{cases}
                1 & \text{if } T[v]=e \\
                0 & \text{else}
            \end{cases} $ \label{crew:line:mark_matching:sum}
        \IF{$P_e=\left| e\right|$}
        \STATE mark $e$ as in Matching
        \STATE mark $e$ as newly completed
        \ENDIF
        \ENDFOR \label{crew:line:mark_matching:end}
        \FOR{$v\in V$ \textbf{active in parallel}}\label{crew:line:mark_newly_completed:begin}
        \FOR{$e \in E[v] \textbf{ in parallel}$}
        \IF{$e$ is newly completed}
        \STATE mark $v$ as newly completed
        \ENDIF
        \ENDFOR
        \ENDFOR \label{crew:line:mark_newly_completed:end}

        \FOR{$e\in E$ \textbf{active in parallel}} \label{crew:line:mark_edge_inactive:begin}
        \STATE any\_marked = $\bigvee_{e \in E(v)} \text{active}(e)$
        \IF{any\_marked}
        \STATE mark $e$ as inactive
        \ENDIF
        \ENDFOR \label{crew:line:mark_edge_inactive:end}

        \FOR{$v\in V$ \textbf{active in parallel}} \label{crew:line:mark_vertex_inactive:begin}
        \IF{$v$ is newly completed}
        \STATE mark $v$ as inactive\label{line:mark:end}
        \ENDIF
        \ENDFOR \label{crew:line:mark_vertex_inactive:end}
        \ENDWHILE
        \ENDPROCEDURE
    \end{algorithmic}
\end{algorithm}
\clearpage
\section{Instance Statistics}
\subsubsection{Hypergraph instances}
\vspace{0.5cm}
\begin{longtblr}[
	caption = {Benchmark set of hypergraphs with $|V|$ representing the number of vertices and $|E|$  the number of hyperedges. $d$ is the maximum edge degree, $\Delta$ is the maximum node degree.},
	label = {table:hypergraph:inst:stat},
	]{colsep = 1mm,
	rowsep = 1mm,
	colspec = {lS[table-format=5.2]S[table-format=5.2]S[table-format=5.2]S[table-format=5.2]S[table-format=5.2]},rowhead =1,  row{even} = {gray9}, row{1} = {font=\scshape}, column{1}={font=\ttfamily}}
Name&{{{$\texttt{avg}|e|$}}} & {{{$|E|$}}} & {{{$d$}}} & {{{$\Delta$}}}& {{{$|V|$}}}\\
en\_wiki\_latest\_dec2024\_access & 3.82 & \numprint{7847749} & \numprint{228} & \numprint{1003480} & \numprint{293177} \\
Bump\_2911 & 43.87 & \numprint{2911419} & \numprint{195} & \numprint{195} & \numprint{2911419} \\
CurlCurl\_4 & 11.14 & \numprint{2380515} & \numprint{13} & \numprint{13} & \numprint{2380515} \\
Emilia\_923 & 44.42 & \numprint{923136} & \numprint{57} & \numprint{57} & \numprint{923136} \\
Flan\_1565 & 75.03 & \numprint{1564794} & \numprint{81} & \numprint{81} & \numprint{1564794} \\
Freescale1 & 5.52 & \numprint{3428755} & \numprint{27} & \numprint{25} & \numprint{3428755} \\
FullChip & 8.91 & \numprint{2987012} & \numprint{2312481} & \numprint{2312476} & \numprint{2987012} \\
Ga41As41H72 & 68.96 & \numprint{268096} & \numprint{702} & \numprint{702} & \numprint{268096} \\
Geo\_1438 & 43.92 & \numprint{1437960} & \numprint{57} & \numprint{57} & \numprint{1437960} \\
HV15R & 140.33 & \numprint{2017169} & \numprint{484} & \numprint{303} & \numprint{2017169} \\
RM07R & 98.16 & \numprint{381689} & \numprint{295} & \numprint{245} & \numprint{381689} \\
StocF-1465 & 14.34 & \numprint{1465137} & \numprint{189} & \numprint{189} & \numprint{1465137} \\
af\_shell10 & 34.93 & \numprint{1508065} & \numprint{35} & \numprint{35} & \numprint{1508065} \\
af\_shell6 & 34.84 & \numprint{504855} & \numprint{40} & \numprint{40} & \numprint{504855} \\
boneS10 & 60.63 & \numprint{914898} & \numprint{81} & \numprint{81} & \numprint{914898} \\
circuit5M & 10.71 & \numprint{5558326} & \numprint{1290501} & \numprint{1290501} & \numprint{5558326} \\
dac2012\_superblue11 & 3.28 & \numprint{935731} & \numprint{3602} & \numprint{1983} & \numprint{952507} \\
dac2012\_superblue12 & 3.69 & \numprint{1293436} & \numprint{8488} & \numprint{1006} & \numprint{1291931} \\
dac2012\_superblue14 & 3.31 & \numprint{619815} & \numprint{8871} & \numprint{1023} & \numprint{630802} \\
dac2012\_superblue16 & 3.27 & \numprint{697458} & \numprint{2367} & \numprint{1016} & \numprint{698339} \\
dac2012\_superblue19 & 3.35 & \numprint{511685} & \numprint{9155} & \numprint{1507} & \numprint{522482} \\
dac2012\_superblue2 & 3.26 & \numprint{990899} & \numprint{2048} & \numprint{1317} & \numprint{1010321} \\
dac2012\_superblue3 & 3.46 & \numprint{898001} & \numprint{2527} & \numprint{2245} & \numprint{917944} \\
dac2012\_superblue6 & 3.37 & \numprint{1006629} & \numprint{7601} & \numprint{1689} & \numprint{1011662} \\
dac2012\_superblue7 & 3.68 & \numprint{1340418} & \numprint{5310} & \numprint{849} & \numprint{1360217} \\
dac2012\_superblue9 & 3.48 & \numprint{833808} & \numprint{7772} & \numprint{1265} & \numprint{844332} \\
dgreen & 31.87 & \numprint{1200611} & \numprint{187} & \numprint{187} & \numprint{1200611} \\
dielFilterV2clx & 41.68 & \numprint{607232} & \numprint{158} & \numprint{158} & \numprint{607232} \\
dielFilterV2real & 41.94 & \numprint{1157456} & \numprint{110} & \numprint{110} & \numprint{1157456} \\
dielFilterV3clx & 78.22 & \numprint{420408} & \numprint{303} & \numprint{303} & \numprint{420408} \\
dielFilterV3real & 80.98 & \numprint{1102824} & \numprint{270} & \numprint{270} & \numprint{1102824} \\
fem\_hifreq\_circuit & 41.21 & \numprint{491100} & \numprint{110} & \numprint{110} & \numprint{491100} \\
gsm\_106857 & 36.91 & \numprint{589446} & \numprint{106} & \numprint{106} & \numprint{589446} \\
hugebubbles-00010 & 3.00 & \numprint{19458087} & \numprint{3} & \numprint{3} & \numprint{19458087} \\
inline\_1 & 73.09 & \numprint{503712} & \numprint{843} & \numprint{843} & \numprint{503712} \\
kmer\_P1a & 2.14 & \numprint{139353211} & \numprint{40} & \numprint{40} & \numprint{139353211} \\
ldoor & 48.86 & \numprint{952203} & \numprint{77} & \numprint{77} & \numprint{952203} \\
msdoor & 48.67 & \numprint{415863} & \numprint{77} & \numprint{77} & \numprint{415863} \\
nlpkkt160 & 27.50 & \numprint{8345600} & \numprint{28} & \numprint{28} & \numprint{8345600} \\
nlpkkt80 & 27.02 & \numprint{1062400} & \numprint{28} & \numprint{28} & \numprint{1062400} \\
rajat31 & 4.33 & \numprint{4690002} & \numprint{1252} & \numprint{1252} & \numprint{4690002} \\
sat14\_11pipe\_k.cnf.dual & 185.79 & \numprint{89315} & \numprint{57073} & \numprint{295} & \numprint{5584003} \\
sat14\_11pipe\_k.cnf & 2.97 & \numprint{5584003} & \numprint{295} & \numprint{28538} & \numprint{178630} \\
sat14\_11pipe\_k.cnf.primal & 2.97 & \numprint{5584003} & \numprint{295} & \numprint{57073} & \numprint{89315} \\
sat14\_9'.cnf.dual & 75.22 & \numprint{521179} & \numprint{220932} & \numprint{6689} & \numprint{13378617} \\
sat14\_9'.cnf & 2.93 & \numprint{13378617} & \numprint{6689} & \numprint{110469} & \numprint{1042358} \\
sat14\_9'.cnf.primal & 2.93 & \numprint{13378617} & \numprint{6689} & \numprint{220932} & \numprint{521179} \\
sat14'1.cnf.dual & 9.84 & \numprint{1621762} & \numprint{587} & \numprint{23} & \numprint{6356704} \\
sat14'1.cnf & 2.51 & \numprint{6356704} & \numprint{23} & \numprint{302} & \numprint{3195850} \\
sat14'1.cnf.primal & 2.51 & \numprint{6356704} & \numprint{23} & \numprint{587} & \numprint{1621762} \\
sat14'2.cnf.dual & 9.84 & \numprint{1621756} & \numprint{587} & \numprint{23} & \numprint{6356692} \\
sat14'2.cnf & 2.51 & \numprint{6356692} & \numprint{23} & \numprint{302} & \numprint{3195838} \\
sat14'2.cnf.primal & 2.51 & \numprint{6356692} & \numprint{23} & \numprint{587} & \numprint{1621756} \\
sat14\_SAT'3.cnf.dual & 9.84 & \numprint{1540071} & \numprint{587} & \numprint{23} & \numprint{6036162} \\
sat14\_SAT'3.cnf & 2.51 & \numprint{6036162} & \numprint{23} & \numprint{302} & \numprint{3034828} \\
sat14\_SAT'3.cnf.primal & 2.51 & \numprint{6036162} & \numprint{23} & \numprint{587} & \numprint{1540071} \\
sat14\_atco'50.dual & 9.94 & \numprint{1613160} & \numprint{1024} & \numprint{15} & \numprint{6429816} \\
sat14\_atco'50 & 2.50 & \numprint{6429816} & \numprint{15} & \numprint{513} & \numprint{3226318} \\
sat14\_atco'50.primal & 2.50 & \numprint{6429816} & \numprint{15} & \numprint{1024} & \numprint{1613160} \\
sat14\_atco'21.dual & 9.98 & \numprint{1484061} & \numprint{733} & \numprint{15} & \numprint{5933456} \\
sat14\_atco'21 & 2.50 & \numprint{5933456} & \numprint{15} & \numprint{368} & \numprint{2968120} \\
sat14\_atco'21.primal & 2.50 & \numprint{5933456} & \numprint{15} & \numprint{733} & \numprint{1484061} \\
sat14\'37'.dual & 39.77 & \numprint{559571} & \numprint{4259} & \numprint{1408} & \numprint{10223027} \\
sat14\'37' & 2.18 & \numprint{10223027} & \numprint{1408} & \numprint{2814} & \numprint{1119142} \\
sat14\'37'.primal & 2.18 & \numprint{10223027} & \numprint{1408} & \numprint{4259} & \numprint{559571} \\
sat14\_esawn\_uw3.debugged.cnf.dual & 10.89 & \numprint{13829558} & \numprint{40760} & \numprint{156} & \numprint{53616734} \\
sat14\_esawn\_uw3.debugged.cnf & 2.81 & \numprint{53616734} & \numprint{156} & \numprint{20380} & \numprint{27659037} \\
sat14\_esawn\_uw3.debugged.cnf.primal & 2.81 & \numprint{53616734} & \numprint{156} & \numprint{40760} & \numprint{13829558} \\
sat14\_sin.c'.dual & 11.29 & \numprint{7916927} & \numprint{155449} & \numprint{110} & \numprint{32591905} \\
sat14\_sin.c' & 2.74 & \numprint{32591905} & \numprint{110} & \numprint{77726} & \numprint{15833854} \\
sat14\_sin.c'.primal & 2.74 & \numprint{32591905} & \numprint{110} & \numprint{155449} & \numprint{7916927} \\
sat14\_sv-'\_3t\_true'.dual & 14.88 & \numprint{9580427} & \numprint{197796} & \numprint{1089} & \numprint{46604325} \\
sat14\_sv-'\_3t\_true' & 3.06 & \numprint{46604325} & \numprint{1089} & \numprint{193812} & \numprint{19160854} \\
sat14\_sv-'\_3t\_true'.primal & 3.06 & \numprint{46604325} & \numprint{1089} & \numprint{197796} & \numprint{9580427} \\
sat14\_sv-comp19'\_false'.dual & 14.70 & \numprint{8043315} & \numprint{12899} & \numprint{12897} & \numprint{39379566} \\
sat14\_sv-comp19'\_false' & 3.00 & \numprint{39379566} & \numprint{12897} & \numprint{12897} & \numprint{16086630} \\
sat14\_sv-comp19'\_false'.primal & 3.00 & \numprint{39379566} & \numprint{12897} & \numprint{12899} & \numprint{8043315} \\
sat14'b71.cnf.dual & 47.23 & \numprint{889302} & \numprint{228944} & \numprint{5391} & \numprint{14582074} \\
sat14'b71.cnf & 2.88 & \numprint{14582074} & \numprint{5391} & \numprint{114472} & \numprint{1778604} \\
sat14'b71.cnf.primal & 2.88 & \numprint{14582074} & \numprint{5391} & \numprint{228944} & \numprint{889302} \\
sat14\_zfcp-2.8-u2-nh.cnf.dual & 6.97 & \numprint{10950109} & \numprint{1777221} & \numprint{65} & \numprint{32697150} \\
sat14\_zfcp-2.8-u2-nh.cnf & 2.33 & \numprint{32697150} & \numprint{65} & \numprint{888611} & \numprint{21899848} \\
sat14\_zfcp-2.8-u2-nh.cnf.primal & 2.33 & \numprint{32697150} & \numprint{65} & \numprint{1777221} & \numprint{10950109} \\
ss & 21.03 & \numprint{1652680} & \numprint{43} & \numprint{27} & \numprint{1652680} \\
stokes & 30.51 & \numprint{11449533} & \numprint{720} & \numprint{1705} & \numprint{11449533} \\
uk-2002 & 18.92 & \numprint{15759370} & \numprint{2450} & \numprint{194942} & \numprint{18520486} \\
uk-2005 & 26.71 & \numprint{35051939} & \numprint{5213} & \numprint{1776852} & \numprint{39459925} \\
vas\_stokes\_1M & 31.88 & \numprint{1090664} & \numprint{577} & \numprint{1013} & \numprint{1090664} \\
vas\_stokes\_2M & 30.34 & \numprint{2146677} & \numprint{637} & \numprint{1284} & \numprint{2146677} \\
vas\_stokes\_4M & 30.03 & \numprint{4382246} & \numprint{613} & \numprint{1116} & \numprint{4382246} \\
wb-edu & 8.26 & \numprint{6920306} & \numprint{3841} & \numprint{25762} & \numprint{9845725} \\

  \end{longtblr}

\subsubsection{Graph instances}
\newcolumntype{b}{>{\ttfamily}X}
\newcolumntype{s}{>{\hsize=.15\hsize\raggedleft\arraybackslash}X}
\begin{tabularx}{\textwidth}{bss}
\caption{Benchmark set of graphs with $|V|$ representing the amount of nodes and $|E|$ the number of edges.}
\label{tab:graphinstances}
\endfirsthead

\caption[]{(Continuation)} \\
\toprule
\endhead

\bottomrule
\multicolumn{3}{r}{\textit{Continuation on the next page}} \\
\endfoot

\bottomrule
\endlastfoot

\toprule
\textbf{Graph} & \textbf{$|V|$} & \textbf{$|E|$} \\
\midrule
af\_shell10 & \numprint{1508065} & \numprint{25582130} \\
af\_shell9 & \numprint{504855} & \numprint{8542010} \\
asia.osm & \numprint{11950757} & \numprint{12711603} \\
audikw1 & \numprint{943695} & \numprint{38354076} \\
belgium.osm & \numprint{1441295} & \numprint{1549970} \\
cage15 & \numprint{5154859} & \numprint{47022346} \\
citationCiteseer & \numprint{268495} & \numprint{1156647} \\
coAuthorsCiteseer & \numprint{227320} & \numprint{814134} \\
coAuthorsDBLP & \numprint{299067} & \numprint{977676} \\
coPapersCiteseer & \numprint{434102} & \numprint{16036720} \\
coPapersDBLP & \numprint{540486} & \numprint{15245729} \\
delaunay\_n19 & \numprint{524288} & \numprint{1572823} \\
delaunay\_n20 & \numprint{1048576} & \numprint{3145686} \\
delaunay\_n21 & \numprint{2097152} & \numprint{6291408} \\
delaunay\_n22 & \numprint{4194304} & \numprint{12582869} \\
delaunay\_n23 & \numprint{8388608} & \numprint{25165784} \\
delaunay\_n24 & \numprint{16777216} & \numprint{50331601} \\
ecology1 & \numprint{1000000} & \numprint{1998000} \\
ecology2 & \numprint{999999} & \numprint{1997996} \\
G3\_circuit & \numprint{1585478} & \numprint{3037674} \\
germany.osm & \numprint{11548845} & \numprint{12369181} \\
great-britain.osm & \numprint{7733822} & \numprint{8156517} \\
italy.osm & \numprint{6686493} & \numprint{7013978} \\
kkt\_power & \numprint{2063494} & \numprint{6482320} \\
ldoor & \numprint{952203} & \numprint{22785136} \\
luxembourg.osm & \numprint{114599} & \numprint{119666} \\
netherlands.osm & \numprint{2216688} & \numprint{2441238} \\
nlpkkt120 & \numprint{3542400} & \numprint{46651696} \\
nlpkkt160 & \numprint{8345600} & \numprint{110586256} \\
nlpkkt200 & \numprint{16240000} & \numprint{215992816} \\
nlpkkt240 & \numprint{27993600} & \numprint{373239376} \\
rgg\_n\_2\_15\_s0 & \numprint{32768} & \numprint{160240} \\
rgg\_n\_2\_16\_s0 & \numprint{65536} & \numprint{342127} \\
rgg\_n\_2\_17\_s0 & \numprint{131072} & \numprint{728753} \\
rgg\_n\_2\_18\_s0 & \numprint{262144} & \numprint{1547283} \\
rgg\_n\_2\_19\_s0 & \numprint{524288} & \numprint{3269766} \\
rgg\_n\_2\_20\_s0 & \numprint{1048576} & \numprint{6891620} \\
rgg\_n\_2\_21\_s0 & \numprint{2097152} & \numprint{14487995} \\
rgg\_n\_2\_22\_s0 & \numprint{4194304} & \numprint{30359198} \\
rgg\_n\_2\_23\_s0 & \numprint{8388608} & \numprint{63501393} \\
rgg\_n\_2\_24\_s0 & \numprint{16777216} & \numprint{132557200} \\
thermal2 & \numprint{1227087} & \numprint{3676134} \\
\end{tabularx}

\end{document}